\pgfplotsset{every linear axis/.append style={width=4.5cm, height=3.5cm,ylabel near ticks, xlabel near ticks}}
\newlength{\abovecaptionskip}
\newfont{\mycrnotice}{ptmr8t at 7pt}
\newfont{\myconfname}{ptmri8t at 7pt}
\title{A Fast Randomized Algorithm for Multi-Objective Query Optimization}
\begin{document}


\newtheorem{lemma}{Lemma}
\newtheorem{theorem}{Theorem}
\newtheorem{corollary}{Corollary}
\newtheorem{example}{Example}
\newtheorem{assumption}{Assumption}
\newtheorem{definition}{Definition}



%
%
%
%
\numberofauthors{1} 

\author{
%
%
\alignauthor
Immanuel Trummer and Christoph Koch\\
			 \email{\{firstname\}.\{lastname\}@epfl.ch}\\
      \affaddr{\'Ecole Polytechnique F\'ed\'erale de Lausanne}
}

\urlstyle{same}

\maketitle

\begin{abstract}
Query plans are compared according to multiple cost metrics in multi-objective query optimization. The goal is to find the set of Pareto plans realizing optimal cost tradeoffs for a given query. So far, only algorithms with exponential complexity in the number of query tables have been proposed for multi-objective query optimization. In this work, we present the first algorithm with polynomial complexity in the query size. 

Our algorithm is randomized and iterative. It improves query plans via a multi-objective version of hill climbing that applies multiple transformations in each climbing step for maximal efficiency. Based on a locally optimal plan, we approximate the Pareto plan set within the restricted space of plans with similar join orders. We maintain a cache of Pareto-optimal plans for each potentially useful intermediate result to share partial plans that were discovered in different iterations. We show that each iteration of our algorithm performs in expected polynomial time based on an analysis of the expected path length between a random plan and local optima reached by hill climbing. We experimentally show that our algorithm can optimize queries with hundreds of tables and outperforms other randomized algorithms such as the NSGA-II genetic algorithm over a wide range of scenarios. 
\end{abstract}

\keywords{Query optimization; multi-objective; randomized algorithms}

\section{Introduction}

Multi-objective query optimization compares query plans according to multiple cost metrics. This is required to model scenarios such as cloud computing where users care about execution time and monetary fees for cloud resources~\cite{Trummer2014}. Another example is approximate query processing where users care about execution time and result precision~\cite{Agarwal2012}. The goal of multi-objective query optimization is then to find the set of Pareto-optimal plans, i.e.\ the query plans realizing optimal cost tradeoffs for a given query. The optimal cost tradeoffs can either be visualized to the user for a manual selection~\cite{Trummer2015a} or the best plan can be selected automatically out of that set based on a specification of user preferences (i.e., in the form of cost weights and cost bounds~\cite{Trummer2014}).

So far, exhaustive algorithms~\cite{ganguly1992query, Trummer2015} and several approximation schemes~\cite{Trummer2014, Trummer2015a} have been proposed to solve the generic multi-objective query optimization problem. The exhaustive algorithms formally guarantee to find the full Pareto frontier while the approximation schemes formally guarantee to approximate the Pareto frontier with a certain minimum precision. Those quality guarantees come at a cost in terms of optimizer performance: all existing algorithms for multi-objective query optimization have at least exponential time complexity in the number of tables (potentially higher depending on the number of Pareto plans). This means that they cannot be applied for queries with elevated number of tables.

For the traditional query optimization problem with one cost metric, there is a rich body of work proposing heuristics and randomized algorithms~\cite{Swami1988, Steinbrunn1997, ioannidis1990randomized, Bennett1991}. Those algorithms offer no formal quality guarantees on how far the generated plans are from the theoretical optimum but often generate good plans in practice. They have polynomial complexity in the number of tables and can be applied to much larger queries than exhaustive approaches. Up to date, corresponding approaches for multi-objective query optimization are missing entirely (and we will show later that algorithms for traditional query optimization perform poorly for the multi-objective case). In this paper, we close that gap and present the first randomized algorithm for multi-objective query optimization with polynomial time complexity. 

Existing algorithms for single- or multi-objective query optimization typically exploit only one out of two fundamental insights about the query optimization problem: dynamic programming based algorithms~\cite{Selinger1979, Kossmann2000, Trummer2014, Trummer2015a, Trummer2015} exploit its decomposability, i.e.\ the fact that a query optimization problem can be split into smaller sub-problems such that optimal solutions (query plans) for the sub-problems can be combined into optimal solutions for the original problem. Randomized algorithms such as iterative improvement, simulated annealing, or two-phase optimization exploit a certain near-convexity (also called well shape~\cite{Ioannidis1991}) of the standard cost functions when using suitable neighboring relationships in the query plan space. There is no reason why both insights shouldn't be exploited within the same algorithm and we do so: our algorithm improves plans using a multi-objective generalization of hill climbing, thereby exploiting near-convexity. It also maintains a plan cache storing partial Pareto-optimal plans generating potentially useful intermediate results. Newly generated plans are decomposed and dominated sub-plans are replaced by partial plans from the cache. Therefore we exploit decomposability as well.

Our algorithm is iterative and performs the following steps in each iteration. First, a query plan is randomly generated. Second, the plan is improved using local search until a local optimum is reached. Third, based on the locally optimal plan we restrict the plan space to plans that are similar in certain aspects (we provide details in the following paragraphs). We approximate the Pareto plan set within that restricted plan space. For that approximation, we might re-use partial plans that were generated in prior iterations if they realize a better cost tradeoff than the corresponding sub-plans of the locally optimal plan.

For the second step, we use a multi-objective version of hill climbing that  exploits several properties of the query optimization problem to reduce time complexity significantly compared to a naive version. First, we exploit the multi-objective principle of optimality for query optimization~\cite{ganguly1992query} stating that replacing a sub-plan by another sub-plan whose cost is not Pareto-optimal cannot improve the cost of the entire plan. This allows to quickly discard local mutations that will not improve the overall plan. Second, we exploit that query plans can be recursively decomposed into sub-plans for which local search can be applied independently. This allows to apply many beneficial mutations simultaneously in different parts of the query tree and therefore reduces the number of complete query plans that need to be generated on the path from the random plan to a local optimum. Those optimizations reduce the time complexity comparing with a naive version and we found them to be critical to achieve good optimizer performance. 

For the third step, we restrict the plan space to plans that generate a similar set of intermediate results as the locally optimal plan that results from the second step. We consider plans that use the same join order as the locally optimal plan but different operator combinations. Also, we consider the possibility to replace sub-plans by plans from a cache (those plans can use a different join order than the locally optimal plan). The cache stores non-dominated partial plans for each potentially useful intermediate result we encountered during optimization so far. Finding the full Pareto plan set even within the restricted plan space may lead to prohibitive computational cost (the number of Pareto plans within the restricted space may grow exponentially in the number of query tables). We therefore approximate the Pareto plan set by a subset of plans (whose size grows polynomially in the number of query tables) realizing representative cost tradeoffs. The precision of that approximation is slowly refined over the iterations. This enables our algorithm to quickly find a coarse-grained approximation of the full Pareto plan set for the given query; at the same time, as we refine precision, the approximation converges to the real Pareto set. 

The insights underlying the design of our algorithm are the following: on the one hand, we observe that the same join order can often realize many Pareto-optimal cost tradeoffs when using different operator configurations. This is why we approximate in the third step a Pareto frontier based on a restricted set of join orders. On the other hand, the full Pareto frontier cannot be covered using only one join order. This is why we generate new plans and join orders in each iteration. 

We analyze our randomized algorithm experimentally, using different cost metrics, query graph structures, and query sizes. We compare against dynamic programming based approximation schemes that were previously proposed for multi-objective query optimization. While approximation schemes are preferable for small queries, we show that only randomized algorithms can handle larger query sizes. We evaluate our algorithm with queries joining up to 100 tables considering an unconstrained bushy plan space. Even in case of one cost metric, dynamic programming based approaches do not scale to such search space sizes. We also compare our algorithm against other randomized algorithms: the non-dominated sort genetic algorithm 2 (NSGA-II)~\cite{Deb2002} is a very popular multi-objective optimization algorithm for the number of plan cost metrics that we consider in our experiments. Genetic algorithms have been very successful for traditional query optimization~\cite{Bennett1991} so a comparison against NSGA-II (using the combination and mutation operators proposed for traditional query optimization) seems interesting. We also compare our algorithm against other multi-objective generalizations of well known randomized algorithms for traditional query optimization such as iterative improvement, simulated annealing, and two-phase optimization~\cite{Steinbrunn1997}. Our randomized algorithm outperforms all competitors significantly, showing that the combination of local search with plan decomposition is powerful.

We analyze the time complexity of our algorithm and show that each iteration has expected polynomial complexity. Our analysis includes in particular a study of the expected path length from a random plan to the nearest local optimum. Based on a simple statistical model of plan cost distributions, we can show that the expected path length grows at most linearly in the number of query tables.

In summary, the original scientific contributions of this paper are the following:
\begin{itemize}
\item We present the first polynomial time algorithm for multi-objective query optimization: a randomized algorithm that exploits several properties of the query optimization problem.
\item We analyze that algorithm formally, showing that each iteration (resulting in at least one query plan) has polynomial complexity in the number of query tables.
\item We evaluate our algorithm experimentally against previously published approximation schemes for multi-objective query optimization and several randomized algorithms. We show that our algorithm outperforms the other algorithms over a wide range of scenarios.
\end{itemize}

The remainder of this paper is organized as follows. We give an overview of related work in Section~\ref{relatedSec}. In Section~\ref{modelSec}, we introduce the formal model used in pseudo-code and formal analysis. We introduce the first randomized algorithm for multi-objective query optimization in Section~\ref{algSec} and analyze its complexity in Section~\ref{analysisSec}. In Section~\ref{experimentsSec}, we experimentally evaluate our algorithm in comparison with several baselines.
\section{Related Work}
\label{relatedSec}

Most work in query optimization treats the single-objective case, meaning that query plans are compared according to only one cost metric (usually execution time)~\cite{Bennett1991, Bruno, Selinger1979, Steinbrunn1997, Swami1988,  Vance1996a}. This problem model is however often insufficient: in a cloud scenario, users might be able to reduce query execution time when willing to pay more money for renting additional resources from the cloud provider~\cite{kllapi2011schedule}. On systems that process multiple queries concurrently, the tradeoff between the amount of dedicated system resources and query execution time needs to be considered~\cite{Trummer2014}. In those and other scenarios, query optimization becomes a multi-objective optimization problem.

Query optimization algorithms that have been designed for the case of one cost metric cannot be applied to the multi-objective case. They return only one optimal plan while the goal in multi-objective query optimization is usually to find a set of Pareto-optimal plans~\cite{Papadimitriou2001, Trummer2015, Trummer2015a}. This allows in particular to let users choose their preferred cost tradeoff out of a visualization of the available tradeoffs~\cite{Trummer2015a}. We will use several variations of single-objective randomized query optimization algorithms as baselines for our experiments. Note that mapping multi-objective optimization into a single-objective optimization problem using a weighted sum over different cost metrics with varying weights will not yield the Pareto frontier but at most a subset of it (the convex hull). 

One exhaustive optimization algorithm and multiple approximation schemes have been proposed for multi-objective query optimization~\cite{ganguly1992query, Trummer2014, Trummer2015a, Trummer2015}. They are based on dynamic programming and have all exponential complexity in the number of query tables. This means that they do not scale to large query sizes. In traditional single-objective optimization, randomized algorithms and heuristics are used for query sizes that cannot be handled by exhaustive approaches. No equivalent is currently available for multi-objective query optimization. In this work, we close that gap and propose the first polynomial time heuristic for multi-objective query optimization.

One of the most popular randomized algorithms for single-objective query optimization is the genetic algorithm~\cite{Bennett1991}. Also, multi-objective genetic algorithm variants are very popular for multi-objective optimization in general~\cite{CoelloCoello2006}. It seems therefore natural to use the crossover and mutation operators that have been proposed for traditional query optimization within a multi-objective genetic algorithm variant. We implemented a version of the widely used non-dominated sort genetic algorithm II (NSGA-II)~\cite{Deb2002} for our experiments. 

We focus in this paper on query optimization in the traditional sense, i.e.\ the search space is the set of available join orders and the selection of scan and join operators. This distinguishes our work for instance from work on multi-objective optimization of workflows~\cite{kllapi2011schedule, Simitsis2010} which does not consider alternative join orders. Prior work by Papadimitriou and Yannakakis~\cite{Papadimitriou2001} also aims at optimizing operator selections for a fixed join order and addresses therefore a different problem than we do. We focus on generic multi-objective query optimization as defined in our prior work~\cite{Trummer2014} and our algorithm is not bound to specific combinations of cost metrics or scenarios such as precision-time~\cite{Agarwal2012} or energy-time tradeoffs~\cite{xu2012pet}.
\section{Formal Model}
\label{modelSec}

A query $q$ is modeled as a set of tables that need to be joined. This query model is simplistic but often used in query optimization~\cite{Swami1988, ganguly1992query, Trummer2014}; extending a query optimization algorithm that optimizes queries represented as table sets to more complex query models is standard~\cite{Selinger1979}. A query plan $p$ for a query $q$ specifies how the data described by the query can be generated by a series of scan and join operations. The plan describes the join order and the operator implementation used for each scan and join. We write \Call{ScanPlan}{$q,op$} to denote a plan scanning the single table $q$ ($|q|=1$) using scan operator $op$. We write \Call{JoinPlan}{$outer,inner,op$} to denote a join plan that joins the results produced by an outer plan $outer$ with the results produced by an inner plan $inner$ using join operator $op$. 

We denote by $p.rel$ the set of tables joined by a plan $p$. It is \Call{ScanPlan}{$q,op$}$.rel=q$ and \Call{JoinPlan}{$po,pi,op$}$.rel=po.rel\cup pi.rel$. For join plans we denote by $p.outer$ the outer plan and by $p.inner$ the inner plan. The property $p.isJoin$ yields \textbf{true} for join plans (where $|p.rel|>1$) and \textbf{false} for scan plans (where $|p.rel|=1$). 

We compare query plans according to their execution cost. We consider multiple cost metrics that might for instance include monetary fees (in a cloud scenario~\cite{kllapi2011schedule}), energy consumption~\cite{xu2012pet}, or various metrics of system resource consumption such as the number of used cores or the amount of consumed buffer space~\cite{Trummer2014} in addition to execution time. We consider cost metrics in the following, meaning that a lower value is preferable. It is straight forward to transform a quality metric on query plans such as result precision~\cite{Agarwal2012} into a cost metric (e.g., result precision can be transformed into the precision loss cost metric as shown in our prior work~\cite{Trummer2014}). Hence we study only cost metrics in the following without restriction of generality.

We denote by $p.cost\in\mathbb{R}^l$ the cost vector associated with a query plan. Each vector component represents cost according to a different cost metric out of $l$ cost metrics. We focus on optimization in this paper and assume that cost models for all considered cost metrics are available. The algorithms that we discuss in this paper are generic and can be applied to a broad set of plan cost metrics. 

In the special case of one cost metric, we say that plan $p_1$ is better than plan $p_2$ if the cost of $p_1$ is lower. Pareto-dominance is the generalization to multiple cost metrics. In case of multiple cost metrics, we say that plan $p_1$ dominates $p_2$, written $p_1\preceq p_2$ if $p_1$ has lower or equivalent cost to $p_2$ according to each considered cost metric. We say that $p_1$ strictly dominates $p_2$, written $p_1\prec p_2$, if $p_1\preceq p_2$ and $p_1.cost\neq p_2.cost$, meaning that $p_1$ has lower or equivalent cost than $p_2$ according to each metric and lower cost in at least one metric. We apply the same terms and notations to cost vectors in general, e.g.\ we write $c_1\preceq c_2$ for two cost vectors $c_1$ and $c_2$ to express that there is no cost metric for which $c_1$ contains a higher cost value than $c_2$.

Considering a set $P$ of alternative plans generating the same results, we call each plan $p\in P$ Pareto-optimal if there is no other plan $\widetilde{p}\in P$ that strictly dominates $p$. For a given query, the Pareto plan set is the set of plans for $q$ that are Pareto-optimal within the set of all possible query plans for $q$. The full Pareto plan set is often too large to be calculated in practice. This is why we rather aim at approximating the Pareto plan set. The following definitions are necessary to establish a measure of how well a given plan set approximates the real Pareto set.

A plan $p_1$ approximately dominates a plan $p_2$ with approximation factor $\alpha\geq1$, written $p_1\preceq_{\alpha}p_2$, if $p_1.cost\preceq\alpha \cdot p_2.cost$. This means that the cost of $p_1$ is not higher than the cost of $p_2$ by more than factor $\alpha$ according to each cost metric. Considering a set $P$ of plans, an $\alpha$-approximate Pareto plan set $P_{\alpha}\subseteq P$ is a subset of $P$ such that $\forall p\in P\exists\widetilde{p}\in P_{\alpha}:\widetilde{p}\preceq_{\alpha}p$, i.e.\ for each plan $p$ in the full set there is a plan in the subset that approximately dominates $p$. The (approximate) Pareto frontier are the cost vectors of the plans in the (approximate) Pareto set.

The goal of multi-objective query optimization is to find $\alpha$-approximate Pareto plan sets for a given input query $q$. We compare different incremental optimization algorithms in terms of the $\alpha$ values (i.e., how well the plan set generated by those algorithms approximates the real Pareto set) that they produce after certain amounts of optimization time.
\section{Algorithm Description}
\label{algSec}

We describe a randomized algorithm for multi-objective query optimization. Section~\ref{overviewSub} describes the main function, the following two subsections describe sub-functions. Section~\ref{climbingSub} describes a multi-objective hill climbing variant that executes multiple plan transformations in one step for maximal efficiency. Section~\ref{frontierSub} describes how we generate a local Pareto frontier approximation for a given join order, using non-dominated partial plans from a plan cache and trying out different operator configurations.

\subsection{Overview}
\label{overviewSub}

\begin{algorithm}[t!]
\renewcommand{\algorithmiccomment}[1]{// #1}
\begin{algorithmic}[1]
\State \Comment{Returns approximate Pareto plan set for query $q$}
\Function{RandomMOQO}{$q$}
\State \Comment{Initialize partial plan cache and iteration counter}
\State $P\gets\emptyset$
\State $i\gets1$
\State \Comment{Refine frontier approximation until timeout}
\While{No Timeout}
\State \Comment{Generate random bushy query plan}
\State $plan\gets$\Call{RandomPlan}{$q$}
\State \Comment{Improve plan via fast local search}
\State $optPlan\gets$\Call{ParetoClimb}{$plan$}
\State \Comment{Approximate Pareto frontier}
\State $P\gets$\Call{ApproximateFrontiers}{$optPlan,P,i$}
\State $i\gets i+1$
\EndWhile
\State \Return{$P[q]$}
\EndFunction
\end{algorithmic}
\caption{Main function.\label{mainAlg}}
\end{algorithm}

Algorithm~\ref{mainAlg} is the main function of our optimization algorithm. The input is a query $q$ and the output a set of query plans that approximate the Pareto plan set for $q$.

Our algorithm is iterative and refines the approximation of the Pareto frontier in each iteration. Each iteration consists of three principal steps: a random query plan is generated (local variable $plan$ in the pseudo-code), it is improved via a multi-objective version of hill climbing, and afterwards the improved plan (local variable $optPlan$) is used as base to generate a local Pareto frontier approximation. For the latter step, a plan cache (local variable $P$ in the pseudo-code) is used that stores for each intermediate result (i.e., a subset $s\subseteq q$ of joined tables) that we encountered so far a set of non-dominated partial plans. For the locally optimal plan resulting from hill climbing, we consider plans that can be obtained by varying the operator configurations but not the join order. In addition, we consider replacing sub-plans by non-dominated partial plans from the plan cache. 

Within that restricted plan space, we do not search for the entire Pareto frontier as its size can be exponential in the number of query tables. Instead, we search for an approximation that has guaranteed polynomial size in the number of query tables. The precision of those approximations is refined with increasing number of iterations: our goal is to obtain a coarse-grained approximation of the entire Pareto frontier quickly so we start with a coarse approximation precision to quickly explore a large number of join orders. In later iterations, the precision is refined to allow to better exploit the set of join orders that was discovered so far. As the frontier approximation precision depends on the iteration number, function~\textproc{ApproximateFrontiers} obtains the iteration counter $i$ as input parameter in addition to the locally optimal plan and the plan cache. All non-dominated partial plans generated during the frontier approximation are inserted into the plan cache $P$ and might be reused in following iterations. 

After the timeout, the result plan set is contained in the plan cache and associated with table set $q$, the entire query table set (we use the array notation $P[q]$ to denote the set of cached Pareto plans that join table set $q$). Note that we can easily use different termination conditions than a timeout. In particular, in case of interactive query optimization where users choose an execution plan based on a visualization of available cost tradeoffs~\cite{Trummer2015a}, optimization ends once the user selects a query plan for execution from the set of plans generated so far.

Our algorithm exploits two ideas that have been very successful in traditional query optimization but are typically used in separate algorithms: our algorithm exploits a certain near-convexity of typical plan cost functions~\cite{Ioannidis1991} by using local search (function~\textproc{ParetoClimb}) to improve query plans. It exploits however at the same time that the query optimization problem can be decomposed into sub-problems, meaning that Pareto-optimal plans joining table subsets can be combined into Pareto-optimal plans joining larger table sets. It is based on the insight that the same join order often allows to construct multiple Pareto-optimal cost tradeoffs by varying operator implementations but takes into account at the same time that not all optimal cost tradeoffs can be found considering only one join order. We describe the two principal sub-functions of Algorithm~\ref{mainAlg}, function~\textproc{ParetoClimb} and function~\textproc{ApproximateFrontiers}, in more detail in the following subsections.

Note finally that the algorithm can easily be adapted to consider different join order spaces (e.g., left-deep plans) by exchanging the random plan generation method and the set of considered local transformations.

\subsection{Pareto Climbing}
\label{climbingSub}

\begin{algorithm}[t!]
\renewcommand{\algorithmiccomment}[1]{// #1}
\begin{algorithmic}[1]
\State \Comment{Plan $p_1$ is better than $p_2$ if it produces the same}
\State \Comment{data format as $p_2$ and has dominant cost.}
\Function{Better}{$p_1,p_2$}
\State \Return{\Call{SameOutput}{$p_1,p_2$}$\wedge(p_1\prec p_2)$}
\EndFunction
\vspace{0.1cm}
\State \Comment{Keeps one Pareto plan per output format.}
\Function{Prune}{$plans,newPlan$}
\If{$\nexists p\in plans:$\Call{Better}{$p,newPlan$}}
\State $plans\gets \{p\in plans|\neg$\Call{Better}{$newPlan,p$}$\}$
\State $plans\gets plans\cup\{newPlan\}$
\EndIf
\State \Return{$plans$}
\EndFunction
\vspace{0.1cm}
\State \Comment{Improve plan $p$ by parallel local transformations}
\Function{ParetoStep}{$p$}
\State \Comment{Initialize optimal mutations of this plan}
\State $pPareto\gets\emptyset$
\If{$p.isJoin$}
\State \Comment{Improve sub-plans by recursive calls}
\State $outerPareto\gets$\Call{ParetoStep}{$p.outer$}
\State $innerPareto\gets$\Call{ParetoStep}{$p.inner$}
\State \Comment{Iterate over all improved sub-plan pairs}
\For{$outer\in outerPareto$}
\For{$inner\in innerPareto$}
\State $p.outer\gets outer$
\State $p.inner\gets inner$
\State \Comment{Mutations for specific sub-plan pair}
\For{$mutated\in$\Call{Mutations}{$p$}}
\State $pPareto\gets$\Call{Prune}{$pPareto, mutated$}
\EndFor
\EndFor
\EndFor
\Else
\State \Comment{$p$ is single-table scan}
\For{$mutated\in$\Call{Mutations}{$p$}}
\State $pPareto\gets$\Call{Prune}{$pPareto, mutated$}
\EndFor
\EndIf
\State \Return{$pPareto$}
\EndFunction
\vspace{0.1cm}
\State \Comment{Climbs until plan $p$ cannot be improved further.}
\Function{ParetoClimb}{$p$}
\State $improving\gets$\textbf{true}
\While{$improving$}
\State $improving\gets$\textbf{false}
\State $mutations\gets$\Call{ParetoStep}{$p$}
\If{$pm\in mutations:pm\prec p$}
\State $p\gets pm$
\State $improving\gets$\textbf{true}
\EndIf
\EndWhile
\State \Return{$p$}
\EndFunction
\end{algorithmic}
\caption{Fast multi-objective hill climbing performing mutations in independent plan sub-trees simultaneously.\label{ParetoClimbAlg}}
\end{algorithm}

Algorithm~\ref{ParetoClimbAlg} shows the pseudo-code of function~\textproc{ParetoClimb} (and of several auxiliary functions) that is used by Algorithm~\ref{mainAlg} to improve query plans via local search. The input is a query plan $p$ to improve and the output is a locally optimal query plan that was reached from the input plan using a series of local transformations.

Hill climbing was already used in traditional query optimization~\cite{Swami1988} but our hill climbing variant differs from the traditional version in several aspects that are discussed in the following. A first obvious difference is that we consider multiple cost metrics on query plans while traditional query optimization considers only execution time. In case of one cost metric, hill climbing moves from one plan to a neighbor plan (i.e., a plan that can be reached via a local transformation~\cite{Steinbrunn1997}) if the neighbor plan has lower cost than the first one. Our multi-objective version moves from a first plan to a neighbor if the neighbor strictly dominates the first one in the Pareto sense, i.e.\ the second plan has lower or equivalent cost according to all metrics and lower cost according to at least one. 

In principle, there can be multiple neighbors that strictly dominate the start plan while different neighbors do not necessarily dominate each other. In such cases, it cannot be determined which neighbor is the best one to move to and all neighbors might be required for the full Pareto frontier. In order to avoid a combinatorial explosion, we still chose to arbitrarily select one neighbor that strictly dominates the start plan instead of opening different path branches during the climb. The goal of function~\textproc{ParetoClimb} within Algorithm~\ref{mainAlg} is to find one good plan while function~\textproc{ApproximateFrontiers} (which is discussed in the next subsection) will take care of exploring alternative cost tradeoffs.

Unlike most prior hill climbing variants used for traditional query optimization~\cite{Swami1988, Steinbrunn1997}, we chose to exhaustively explore all neighbor plans in each step of the climb instead of randomly sampling a subset of neighbors. We initially experimented with random sampling of neighbor plans which led however to poor performance. We believe that this is due to the fact that dominating neighbors become more and more sparse as the number of considered cost metrics grows. Using the simple statistical model that we introduce in Section~\ref{analysisSec}, the probability of finding a dominating neighbor decreases exponentially in the number of cost metrics. Furthermore, sampling introduces overhead and makes it harder if not impossible to use the techniques for complexity reduction that we describe next.

Reducing the time complexity of local search as much as possible is crucial as function~\textproc{ParetoClimb} is called in each iteration of Algorithm~\ref{mainAlg}. We exploit properties of the multi-objective query optimization problem in order to make our implementation of local search much more efficient than a naive implementation. A naive hill climbing algorithm iterates the following steps until a local optimum is reached: in each step, it traverses all nodes of the current query plan tree and applies to each node a fixed set of local mutations. For each mutation and plan node, a new complete neighbor plan is created and its cost is calculated. Based on that cost, the next plan is selected among the neighbors. This naive approach has per step quadratic complexity in the number of plan nodes (which is linear in the number of query tables). 

For a first improvement, we can exploit the principle of optimality for multi-objective query optimization~\cite{ganguly1992query}. After applying a local transformation to one specific node in the query tree, it is not always necessary to calculate the cost of the completed plan (at the tree root) in order to determine whether that mutation reduces the plan cost. Due to the principle of optimality, improving a sub-plan cannot worsen the entire plan (we assume for the current discussion that all alternative plans produce the same data representation, neglecting for instance the impact of interesting orders~\cite{Selinger1979}). In many cases, reducing the cost of a sub-plan even guarantees that the cost of the overall plan decreases as well\footnote{This is guaranteed for cost metrics such as energy consumption, monetary cost, precision loss, and execution time when considering plans without parallel branches~\cite{Trummer2014} where the cost of a plan is calculated as weighted sum or product over the cost of its sub-plans. It might not hold in special cases (e.g., when replacing a sub-plan that is not on the critical path in a parallel execution scenario by a faster one).}. This means that we can at least exclude that a certain mutation reduces the cost of the entire plan if it worsens the cost of the sub-plan to which it was applied. As the cost of the sub-plan can be recalculated in constant time (when treating the number of cost metrics as a constant as it is common in multi-objective query optimization~\cite{Trummer2014}), this simple optimization already reduces the complexity per step from quadratic in the number of tables to linear whenever the chances that a local cost improvement does not yield a global improvement are negligible.

While the last optimization reduces the complexity per climbing step, the optimization discussed next tends to reduce the number of steps required to reach the nearest local optimum. Query plans are represented as trees and we can simultaneously apply mutations in independent sub-trees; it is not necessary to generate complete query trees after each single mutation which is what the naive hill climbing variant does. If we reduce the cost of several sub-trees simultaneously then the cost of the entire plan cannot worsen either due to the principle of optimality. Applying multiple beneficial transformations in different parts of the query tree simultaneously reduces the number of completed query trees that are created on the way to the local optimum. 

Note that all discussed optimizations are also useful to improve the efficiency of local search for traditional query optimization with one cost metric. Local search has already been used for traditional query optimization but we were not able to find any discussion of the aforementioned issues in the literature while they have significant impact on the performance (the second optimization alone reduced the average time for reaching local optima from randomly selected plans by over one order of magnitude for queries with 50 tables in a preliminary benchmark).

Algorithm~\ref{ParetoClimbAlg} integrates all of the aforementioned optimizations. Function~\textproc{ParetoClimb} performs plan transformations until the plan cannot be improved anymore, i.e.\ there is no neighbor plan with dominant cost. Function~\textproc{ParetoStep} realizes one transformation step. It may return multiple Pareto-optimal plan mutations that produce data in different representations (e.g., materialized versus non-materialized). We must do so since sub-plans producing different data representations cannot be compared as the data representation can influence the cost (or applicability) of other operations higher-up in the plan tree. We assume that the standard mutations for bushy query plans~\cite{Steinbrunn1997} are considered for each node in the plan tree. Function~\textproc{ParetoStep} might however mutate multiple nodes in the tree during one call: when treating join plans then the outer and the inner sub-plan are both replaced by dominant mutations via a recursive call. We try out each combination of potentially improved sub-plans and try all local transformations for each combination. The resulting plans are pruned such that only one non-dominated plan is kept for each possible output data representation. Function~\textproc{Better} is used during pruning to compare query plans and returns \textbf{true} if and only if a first plan produces the same output data representation as a second (tested using function~\textproc{SameOutput}) and the cost vector of the first plan strictly dominates the one of the second. 

The following example illustrates how Algorithm~\ref{ParetoClimbAlg} works.

\begin{example}
We assume for simplicity that only one plan cost metric is considered, that we have only one scan and join operator implementation such that only join order matters, and that the only mutation is the exchange of outer and inner join operands. We invoke function~\textproc{ParetoClimb} on an initial query plan $(S\Join T)\Join(R\Join U)$. Function~\textproc{ParetoClimb} invokes function~\textproc{ParetoStep} to improve the initial plan. The root of the initial query plan is a join between $(S\Join T)$ and $(R\Join U)$. Function~\textproc{ParetoStep} tries to improve the two operands of that join via recursive calls before considering mutations at the plan root. Hence one instance of \textproc{ParetoStep} is invoked to improve the outer operand $(S\Join T)$ and another instance is invoked to improve the inner operand $(R\Join U)$. The instance of function \textproc{ParetoStep} treating $(S\Join T)$ spawns new instances for improving the two join operands $S$ and $T$. As we consider no scan operator mutations here, those invocations do not have any effect. After trying to improve $S$ and $T$, the instance treating $(S\Join T)$ tries the mutation $(T\Join S)$. Assume that this mutation reduces cost. Then the improved sub-plan $(T\Join S)$ will be returned to the instance of function \textproc{ParetoStep} treating the initial plan. Assume that the sub-plan $(R\Join U)$ cannot be improved by exchanging outer and inner join operand. Then the top-level instance of function \textproc{ParetoStep} will try the mutation $(R\Join U)\Join (T\Join S)$ and compare its execution cost to the cost of $(T\Join S)\Join(R\Join U)$. The cheaper plan is returned to function~\textproc{ParetoClimb} which detects that the initial plan has been improved. This function performs another iteration of the main loop as changing the initial plan can in principle enable new transformations that yield further improvements. This is not possible in the restricted setting of our example and hence function~\textproc{ParetoClimb} terminates after two iterations.
\end{example}

\subsection{Frontier Approximation}
\label{frontierSub}

\begin{algorithm}[t!]
\renewcommand{\algorithmiccomment}[1]{// #1}
\begin{algorithmic}[1]
\State \Comment{Checks if plan $p_1$ is significantly better than $p_2$}
\State \Comment{using coarsening factor $\alpha$ for cost comparison.}
\Function{SigBetter}{$p_1,p_2,\alpha$}
\State \Return{\Call{SameOutput}{$p_1,p_2$}$\wedge p_1\preceq_{\alpha}p_2$}
\EndFunction
\vspace{0.1cm}
\State \Comment{Returns an $\alpha$-approximate Pareto frontier.}
\Function{Prune}{$plans,newP,\alpha$}
\State \Comment{Can we approximate the cost of the new plan?}
\If{$\nexists p\in plans:$\Call{SigBetter}{$p,newP,\alpha$}}
\State $plans\gets \{p\in plans|\neg$\Call{SigBetter}{$newP,p,1$}$\}$
\State $plans\gets plans\cup\{newP\}$
\EndIf
\State \Return{$plans$}
\EndFunction
\vspace{0.1cm}
\State \Comment{Approximates the Pareto frontier for each}
\State \Comment{intermediate result that appears in plan $p$,}
\State \Comment{using partial plans from the plan cache $P$. }
\State \Comment{The precision depends on the iteration count $i$.}
\Function{ApproximateFrontiers}{$p,P,i$}
\State \Comment{Calculate target approximation precision}
\State $\alpha\gets 25\cdot 0.99^{\lfloor i/25\rfloor}$
\If{$p.isJoin$}
\State \Comment{Approximate outer and inner plan frontiers}
\State $P\gets$\Call{ApproximateFrontiers}{$p.outer,P,i$}
\State $P\gets$\Call{ApproximateFrontiers}{$p.inner,P,i$}
\State \Comment{Iterate over outer Pareto plans}
\For{$outer\gets P[p.outer.rel]$}
\State \Comment{Iterate over inner Pareto plans}
\For{$inner\gets P[p.inner.rel]$}
\State \Comment{Iterate over applicable join operators}
\For{$op\in$\Call{JoinOps}{$outer,inner$}}
\State \Comment{Generate new plan and prune}
\State $np\gets$\Call{JoinPlan}{$outer,inner,op$}
\State $P[p.rel]\gets$\Call{Prune}{$P[p.rel],np,\alpha$}
\EndFor
\EndFor
\EndFor
\Else
\State \Comment{Iterate over applicable scan operators}
\For{$op\in$\Call{ScanOps}{$p.rel$}}
\State $np\gets$\Call{ScanPlan}{$p.rel,op$}
\State $P[p.rel]\gets$\Call{Prune}{$P[p.rel],np,\alpha$}
\EndFor
\EndIf
\State \Comment{Return updated plan cache}
\State \Return{$P$}
\EndFunction
\end{algorithmic}
\caption{Approximating the Pareto frontiers of all intermediate results occuring within given plan.\label{frontierAlg}}
\end{algorithm}

The goal of Pareto climbing is to find one plan that is at least locally Pareto-optimal. Having such a plan, it is often possible to obtain alternative optimal cost tradeoffs by varying the operator implementations while reusing the same join order\footnote{More precisely, this is possible when choosing an appropriate formalization of the plan space: when considering tradeoffs between buffer space consumption and execution time, we can for instance introduce different versions of the standard join operators that work with different amounts of buffer space. When considering tradeoffs between result precision and execution time in approximate query processing, we might introduce different scan operator versions associated with different sample densities. In a cloud scenario, we can introduce operator versions that are associated with different degrees of parallelism, allowing to trade monetary cost for execution time.}. We exploit that fact in function~\textproc{ApproximateFrontiers} whose pseudo-code is shown in Algorithm~\ref{frontierAlg}. 

Function~\textproc{ApproximateFrontiers} obtains a query plan $p$ whose join order it exploits, the plan cache $P$ mapping intermediate results to non-dominated plans 
generating them, and the iteration counter $i$ (counting iterations of the main loop in Algorithm~\ref{mainAlg}) as input. The output is an updated plan cache in which the non-dominated plans generated in the current invocation have been inserted for the corresponding intermediate results.

Function~\textproc{ApproximateFrontiers} starts by choosing the approximation factor $\alpha$ which depends on the iteration number $i$. A higher approximation factor reduces the time required for approximation but a lower approximation factor yields a more fine-grained approximation. We will see in Section~\ref{analysisSec} that \textproc{ApproximateFrontiers} has dominant time complexity within each iteration so a careful choice of the approximation factor is crucial. Multi-objective query optimization can be an interactive process in which alternative cost tradeoffs are visualized to the user such that he can make his choice~\cite{Trummer2015a}. Especially in such scenarios, it is beneficial to rather obtain a coarse-grained approximation of the entire Pareto frontier quickly than to obtain a very fine-grained approximation of only a small part of it. As approximating the entire Pareto frontier requires in general considering many join orders (see our remark at the end of this subsection), we do not want to spend too much time at the beginning exploiting one single join order. For that reason we start with a coarse-grained approximation factor that is reduced as iterations progress. We have tried different formulas for choosing $\alpha$ and the formula given in the pseudo-code worked best for a broad range of scenarios.


Having chosen the approximation precision, the function approximates the Pareto frontier for scan plans by trying out all available scan operators on the given table. For join plans, a frontier is first generated for the outer and inner operand using recursive calls (so we traverse the query plan tree in post-order). After that, we consider each possible pair of a plan from the outer frontier with a plan from the inner frontier and each applicable join operator and generate one new plan for each such combination. Newly generated plans are pruned again but the definition of the pruning function differs from the one we used in Algorithm~\ref{ParetoClimbAlg}. For the same output data properties, the pruning function in Algorithm~\ref{frontierAlg} might now keep multiple plans that realize different optimal cost tradeoffs. However, in contrast to the previous pruning variant, new plans are only inserted if their cost cannot be approximated by any of the plans already in the pruned plan set. We will see in Section~\ref{analysisSec} that this pruning function guarantees that the number of plans stored for a single table set is bounded by a polynomial in the number of query tables. 

Note that function~\textproc{ApproximateFrontiers} does not only try different operator combinations for a given join order. Instead, we consider all non-dominated plans that are cached for the intermediate results generated by the input plan $p$. The plans we consider might have been inserted into the plan cache in prior iterations of the main loop and might therefore use different join orders. The plan cache is our mean of sharing information across different iterations of the main loop. As iterations continue, the content of the plan cache will more and more resemble the set of partial plans that is generated by dynamic programming based approximation schemes proposed for multi-objective query optimization~\cite{Trummer2014, Trummer2015a}. However, instead of approximating the frontier for each possible intermediate result (implying exponential complexity), we only treat table sets that are used by locally Pareto-optimal plans.

Note finally that it is in general not possible to obtain all Pareto optimal query plans by varying operators for a fixed join order. Considering again the tradeoff between buffer space and execution time, a left-deep plan using pipelining might for instance minimize execution time while a non-pipelined bushy plan achieves the lowest buffer footprint. Or in case of approximate query processing, different tradeoffs between result precision and execution time can be achieved by varying the sample size generated by scan operators so the output size for each table and hence the optimal join order depends on the operator configurations. This means that we need to consider different join orders in order to obtain the full Pareto set; we cannot decompose query optimization into join order selection followed by operator selection. Our algorithm respects that fact.

\section{Complexity Analysis}
\label{analysisSec}

We analyze the time and space complexity of the algorithm presented in Section~\ref{algSec}. We call that algorithm short RMQ for randomized multi-objective query optimizer.

We denote by $n$ the number of query tables to be joined. The number of cost metrics according to which query plans are compared is specified by $l$. Similar to prior work~\cite{Trummer2014, Trummer2015a}, we treat $n$ as a variable and $l$ as a constant: users choose the query size but the set of relevant cost metrics is usually determined by the query processing context. As in prior work~\cite{ganguly1992query}, we simplify the following formulas by assuming only one join operator while the generalization is straight-forward. We also neglect interesting orders such that query plans joining the same tables are only compared according to their cost vectors.

We analyze the time complexity of one iteration. Each iteration consists of three steps (random plan generation, local search, and frontier approximation); we analyze the complexity of each of those steps in the following. Random plan sampling (function~\textproc{RandomPlan} in Algorithm~\ref{mainAlg}) can be implemented with linear complexity as shown by the following lemma.

\begin{lemma}
Function~\textproc{RandomPlan} executes in $O(n)$ time.\label{randomComplexityLemma}
\end{lemma}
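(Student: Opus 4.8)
The plan is to exhibit an explicit linear-time sampling procedure for \textproc{RandomPlan} and argue that each of its phases runs in $O(n)$ time. A random bushy plan over $n$ tables can be built in three phases: first, fix an arbitrary permutation (or a uniformly random permutation) of the $n$ query tables into leaf positions; second, build a random binary tree shape on $n$ leaves; third, decorate each leaf with a randomly chosen scan operator and each internal node with a randomly chosen join operator (respecting applicability, which for the model of this paper — no interesting orders, a single join operator in the analysis — is trivial). Since $l$ and the number of operator implementations are treated as constants, the decoration phase is clearly $O(n)$, so the crux is generating the tree shape and the leaf assignment in linear time.

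First I would describe the tree-generation step concretely. One clean way: maintain a working list of subplans, initialized with the $n$ single-table \Call{ScanPlan}{} nodes (with random scan operators); then repeat $n-1$ times the step of removing two subplans from the list — chosen by any rule that takes $O(1)$ amortized time, e.g.\ repeatedly popping the last two, or swap-and-pop on random indices — and pushing back \Call{JoinPlan}{$outer,inner,op$} with a random join operator $op$ and a random choice of which of the two is outer. Each merge is constant work (creating one node, setting two child pointers, sampling $O(1)$ random choices), and there are exactly $n-1$ merges, so this phase is $O(n)$. The initial list construction and the random permutation of the tables are each $O(n)$ (a Fisher--Yates shuffle). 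Summing the three phases gives $O(n)$ total, which is the claim.

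I would then note two modelling points that keep the argument honest. (i) The plan node model of Section~\ref{modelSec} stores $p.rel$ with each node; a naive recomputation of $p.rel$ as the union $po.rel\cup pi.rel$ at every merge would be superlinear, so either we store $rel$ only implicitly or we represent it so that the union at a merge costs $O(1)$ (e.g.\ we never need to materialize the full set during sampling — it is determined by the subtree — and any later code that needs it is outside \textproc{RandomPlan}). (ii) ``Random'' here only needs to mean that every leaf assignment and every local shape choice is drawn from a fixed distribution; the lemma does not assert uniformity over bushy plans, only an $O(n)$ running time, so I do not need to worry about matching the Catalan-number distribution exactly — though if uniformity over plan shapes is desired one can generate a uniformly random binary tree on $n$ leaves (via a random ballot sequence / remy's algorithm) still in $O(n)$.

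The main obstacle is essentially bookkeeping rather than mathematics: making sure that the data-structure operations invoked per merge — list removal/insertion, child-pointer assignment, and any maintenance of $p.rel$ — are genuinely $O(1)$ (amortized) so that the $n-1$ merges sum to $O(n)$, and that the operator-applicability check \Call{JoinOps}{}/\Call{ScanOps}{} is $O(1)$ under the stated assumption that $l$ and the operator catalogue size are constants. Given those, the $O(n)$ bound follows immediately by summing a constant number of $O(n)$ phases, with no recursion to unwind and no probabilistic estimate needed.
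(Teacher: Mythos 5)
Your proposal is correct and follows essentially the same route as the paper, which simply cites Quiroz (1989) for $O(n)$ generation of a random binary tree shape and notes that the labels (tables and operators) can be assigned in $O(n)$ as well. The only step you omit is that the paper also charges the cost estimation of the sampled plan, $O(ln)=O(n)$, to \textproc{RandomPlan}; conversely, your caveat that $p.rel$ must not be materialized eagerly at every node is a bookkeeping point the paper glosses over, but neither difference changes the argument or the bound.
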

\begin{proof}
Query plans are labeled binary trees whose leaf nodes represent input tables. Quiroz shows how to generate binary trees in $O(n)$~\cite{Quiroz1989}. Labels representing input tables and operators can be selected in $O(n)$ as well. Estimating the cost of a query plan according to all cost metrics is in $O(ln)=O(n)$ time (as $l$ is a constant). 
\end{proof}

Next we analyze the complexity of local search, realized by function~\textproc{ParetoClimb} in Algorithm~\ref{mainAlg}. We first analyze the complexity of function~\textproc{ParetoStep} realizing a single step on the path to the next local optimum.

\begin{lemma}
Function~\textproc{ParetoStep} executes in $O(n)$ time.\label{neighborsComplexityLemma}
\end{lemma}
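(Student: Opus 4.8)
The plan is to bound the work done by a single top-level invocation of \textproc{ParetoStep} by charging a constant amount of work to each node of the query plan tree, using the fact that the tree has $O(n)$ nodes (it is a binary tree with $n$ leaves, hence $2n-1$ nodes total). The subtlety is that \textproc{ParetoStep} is recursive and at a join node it iterates over the \emph{cross product} $outerPareto\times innerPareto$ of the Pareto sets returned by the two child calls, so naively the cost looks like it could blow up multiplicatively along the tree. I would therefore first establish that the size of the set returned by \textproc{ParetoStep} at any node is bounded by a constant. This follows because \textproc{Prune} (the version in Algorithm~\ref{ParetoClimbAlg}) keeps at most one plan per output data representation (via \textproc{Better}, which only compares plans with \textproc{SameOutput}), and the number of distinct output data representations is a property of the physical plan space, not of $n$ — with $l$ treated as a constant and a fixed, constant-size set of operator implementations, this is $O(1)$. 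Call this bound $c_{out}$.

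Given that, I would analyze the recursion. Let $T(p)$ denote the time spent in the call \textproc{ParetoStep}$(p)$ \emph{excluding} the time spent in its two recursive sub-calls. At a join node, the body performs two nested loops of size at most $c_{out}$ each, and for each of the $c_{out}^2$ sub-plan pairs it enumerates \Call{Mutations}{$p$}, which by assumption (standard bushy-plan mutations applied at a single node) yields $O(1)$ mutated plans; each mutated plan requires a cost recomputation, and here I would invoke the remark in the text that the cost of the mutated plan can be obtained in constant time from the costs of its sub-plans (weighted-sum / weighted-product cost models, $l$ constant), together with an $O(1)$ call to \textproc{Prune} whose running plan set also has size $O(1)$. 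Hence $T(p) = O(c_{out}^2) = O(1)$ at every node, join or scan. Summing $T(p)$ over all $2n-1$ nodes of the tree gives total running time $O(n)$.

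The one step that needs care — and is the main obstacle — is justifying that the cost of each freshly mutated complete-subtree plan is recomputable in $O(1)$ rather than $O(\mathrm{subtree\ size})$. If one insisted on recomputing a sub-plan's cost by traversing it, then $T(p)$ at a node of subtree size $s$ would be $O(s)$ and the sum over the tree would be $O(n^2)$, matching the "naive" bound the paper explicitly wants to beat. The resolution is exactly the first optimization discussed in Section~\ref{climbingSub}: because a mutation touches only the node $p$ itself (swapping operands or operator at $p$), and the costs of $p.outer$ and $p.inner$ are already known (they are carried along in the Pareto sets returned by the recursive calls), the cost of the mutated $p$ is a constant-time function of those already-available sub-plan cost vectors. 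I would state this as a hypothesis on the cost model (consistent with the paper's standing assumptions and the cited footnote) and then the $O(1)$-per-node charge goes through. A minor secondary point to address is that the recursion depth equals the height of the tree, which is $O(n)$ in the worst case, but since each stack frame does only $O(1)$ work beyond its recursive calls and there are only $O(n)$ frames total, this contributes nothing beyond the $O(n)$ bound.
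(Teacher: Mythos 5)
Your proof is correct and follows essentially the same route as the paper's: a constant amount of work charged to each of the $O(n)$ plan-tree nodes, with the cross-product loops kept constant-size by bounding the cardinality of the set each recursive call returns. The paper simply \emph{assumes} each \textproc{ParetoStep} instance returns a single non-dominated plan (pruning on cost alone) where you derive a constant bound $c_{out}$ from the fixed number of output representations, and your explicit justification of the $O(1)$ cost recomputation is a detail the paper relegates to the discussion in Section~\ref{climbingSub} rather than the proof itself.
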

\begin{proof}
We apply a constant number of mutations at each of the $O(n)$ plan nodes. We assume that plans are only pruned based on their cost values such that each instance of function~\textproc{ParetoStep} returns only one non-dominated plan. Plan comparisons take $O(l)=O(1)$ time which leads to the postulated complexity. 
\end{proof}

The expected time complexity of local search depends of course on the number of steps required to reach the next local Pareto optimum (a plan that is not dominated by any neighbor). We analyze the expected path length based on a simple statistical model in the following: we model the cost of a random plan according to one cost metric as random variable and assume that the random cost variables associated with different metrics are independent from each other. This assumption is simplifying but standard in the analysis of multi-objective query optimization algorithms~\cite{ganguly1992query, Trummer2015}.

\begin{lemma}
The probability that a randomly selected plan dominates another is $(1/2)^l$.\label{domProbLemma}
\end{lemma}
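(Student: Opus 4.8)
The plan is to decompose the domination event into one event per cost metric and exploit the independence assumed in the statistical model of Section~\ref{analysisSec}. First I would fix notation: let $p_1$ and $p_2$ be two plans drawn independently at random, and for $j\in\{1,\dots,l\}$ write $C^{(1)}_j$ and $C^{(2)}_j$ for the (random) cost of $p_1$ and of $p_2$ according to metric $j$. By the model, for each fixed $i$ the variables $C^{(i)}_1,\dots,C^{(i)}_l$ are independent, the two plans are sampled independently, and — as is implicit when plan cost is treated as a continuous random variable — each $C^{(i)}_j$ has no atoms, so ties occur with probability zero and $\preceq$, $\prec$ agree almost surely.

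Next I would handle a single metric. Since $p_1$ and $p_2$ are identically distributed and independent, the pair $(C^{(1)}_j,C^{(2)}_j)$ is exchangeable; because $C^{(1)}_j$ and $C^{(2)}_j$ are i.i.d.\ with a continuous law, the events $\{C^{(1)}_j<C^{(2)}_j\}$ and $\{C^{(2)}_j<C^{(1)}_j\}$ have equal probability and partition the sample space up to the null event $\{C^{(1)}_j=C^{(2)}_j\}$, hence $P(C^{(1)}_j\le C^{(2)}_j)=1/2$.

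Finally, $p_1$ dominates $p_2$ exactly when $C^{(1)}_j\le C^{(2)}_j$ for every $j$, i.e.\ the domination event is $\bigcap_{j=1}^{l}\{C^{(1)}_j\le C^{(2)}_j\}$. The $l$ events in this intersection involve disjoint blocks of the cost variables, and since the cost metrics are independent (for each of the two independently sampled plans), these events are mutually independent. Therefore $P(p_1\preceq p_2)=\prod_{j=1}^{l}P(C^{(1)}_j\le C^{(2)}_j)=(1/2)^l$, as claimed.

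The only delicate points are bookkeeping: making precise that the family $\{C^{(i)}_j\}$ is jointly independent so that the per-metric events factorize, and invoking the (standard, if usually unstated) continuity of the cost distributions to dispose of ties. Neither requires real work once the model is spelled out, so I do not expect a genuine obstacle here.
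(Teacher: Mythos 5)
Your proposal is correct and follows essentially the same route as the paper: decompose dominance into per-metric events, each of probability $1/2$, and multiply using the assumed independence of cost metrics. The only difference is that you spell out the symmetry/no-ties justification for the per-metric probability of $1/2$, which the paper simply asserts.
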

\begin{proof}
The probability that a randomly selected plan dominates another plan according to one single cost metric is $1/2$. Assuming independence between different cost metrics, the probability that a random plan dominates another one according to all $l$ cost metrics is $(1/2)^l$.
\end{proof}

\begin{lemma}
The probability that none of $n$ plans dominates all plans in a plan set of cardinality $i$ is $(1-(1/2)^{li})^n$.
\end{lemma}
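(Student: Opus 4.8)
The plan is to chain together Lemma~\ref{domProbLemma} with the independence assumptions of the simple statistical cost model adopted above. First I would fix a single one of the $n$ plans, say $p$, and compute the probability that $p$ dominates every plan in the reference set of cardinality $i$. Writing the reference plans as $r_1,\dots,r_i$, this event is $\bigcap_{j=1}^{i}\{p\preceq r_j\}$. Treating the individual domination events as independent (as the model does) and applying Lemma~\ref{domProbLemma} to each factor, the probability that $p$ dominates all $i$ reference plans is $\bigl((1/2)^{l}\bigr)^{i}=(1/2)^{li}$. The complementary event --- that $p$ does not dominate the whole set, i.e.\ that $p$ fails to dominate at least one $r_j$ --- therefore has probability $1-(1/2)^{li}$.

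Second I would lift this from a single plan to the collection of all $n$ plans $p_1,\dots,p_n$. The event ``none of the $n$ plans dominates the entire reference set'' is precisely $\bigcap_{k=1}^{n}\{p_k\text{ does not dominate }\{r_1,\dots,r_i\}\}$. Again treating these $n$ events as independent, the probability of the intersection is the product of the $n$ identical factors computed in the first step, which gives $(1-(1/2)^{li})^{n}$, the claimed expression.

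The computation is only a two-line chaining of probabilities, so the point that deserves explicit attention is the scope of the independence being invoked: the argument uses not only independence across cost metrics (which the model states) but also independence of the pairwise domination events --- both among the $i$ tests ``$p_k\preceq r_j$'' for a fixed $p_k$ and across the $n$ plans $p_k$. These are exactly the simplifying assumptions underlying the whole statistical analysis of this section, so I would make sure the lemma statement and its proof flag this dependence on the model rather than presenting the identity as model-free. Giving an honest accounting of which assumptions are used is the part I expect to require the most care; the rest follows immediately from Lemma~\ref{domProbLemma}.
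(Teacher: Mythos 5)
Your argument is correct and follows essentially the same route as the paper's proof: chain Lemma~\ref{domProbLemma} with independence across the $i$ pairwise domination events to get $(1/2)^{li}$, complement to get $1-(1/2)^{li}$, and raise to the $n$-th power using independence across the $n$ candidate plans. Your explicit flagging of exactly which independence assumptions are invoked is a slight improvement in rigor over the paper's terser wording, but the substance is identical.
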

\begin{proof}
The probability that one plan dominates another is $(1/2)^l$ according to Lemma~\ref{domProbLemma}. The probability that one plan dominates all of $i$ other plans is $(1/2)^{li}$, assuming independence between the dominance probabilities between different plan pairs. The probability that a plan does not dominate all of $i$ other plans is $1-(1/2)^{li}$. The probability that none of $n$ plans dominates all of $i$ plans is $(1-(1/2)^{li})^n$.
\end{proof}

We denote by $u(n,i)=(1-(1/2)^{li})^n$ the probability that none of $n$ plans dominates all $i$ plans. We simplify in the following assuming that each plan has exactly $n$ neighbors.

\begin{theorem}
The expected number of plans visited by the hill climbing algorithm until finding a local Pareto optimum is $\sum_{i=1..\infty}i\cdot u(n,i)\cdot \prod_{j=1..i-1}(1-u(n,j))$.\label{nrNodesTh}
\end{theorem}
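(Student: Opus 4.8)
The plan is to view the climb as a sequence of Bernoulli-type trials, one per visited plan, where the ``success'' at step $i$ is that the $i$-th plan is already a local Pareto optimum. Write $L$ for the number of plans visited, so the climb produces a chain $p_1\to p_2\to\dots\to p_L$ in which $p_1$ is the random start plan, each $p_{t+1}$ is a neighbor of $p_t$ that strictly dominates it, and $p_L$ has no dominating neighbor. The target is $E[L]$, and I would establish the stated closed form by computing $P(L=i)$ for each $i$ and summing $\sum_i i\cdot P(L=i)$.

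The first step is to reduce the event ``$p_t$ is a local optimum'' to one that the preceding lemma handles. By transitivity of Pareto dominance, the costs along the path form a chain $p_t\prec p_{t-1}\prec\dots\prec p_1$, so a neighbor $q$ of $p_t$ dominates $p_t$ if and only if it dominates every plan in $\{p_1,\dots,p_t\}$: the ``only if'' direction follows by transitivity through the chain, and ``if'' is immediate since $p_t$ belongs to that set. Hence, conditioned on the climb having reached $p_t$, the plan $p_t$ is locally optimal exactly when none of its $n$ neighbors dominates all $t$ plans $p_1,\dots,p_t$. Under the statistical model -- cost metrics independent, ties a.s.\ impossible so that strict and weak dominance coincide, and (by the simplifying assumption stated just before the theorem) exactly $n$ neighbors, each modelled as a fresh random plan -- the lemma immediately preceding this theorem gives that this probability equals $u(n,t)=(1-(1/2)^{lt})^n$.

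The second step treats the success events at successive steps as independent, the other modelling assumption implicit in the setup (the neighborhoods met at different steps supply independent randomness). Then the climb reaches step $t$ iff none of $p_1,\dots,p_{t-1}$ was locally optimal, so $P(L\geq t)=\prod_{j=1}^{t-1}(1-u(n,j))$ (empty product for $t=1$), and therefore $P(L=i)=P(L\geq i)-P(L\geq i+1)=u(n,i)\prod_{j=1}^{i-1}(1-u(n,j))$. Summing gives $E[L]=\sum_{i=1}^{\infty} i\cdot u(n,i)\prod_{j=1}^{i-1}(1-u(n,j))$, which is the claim; for $i=1$ this correctly reduces to $P(L=1)=(1-(1/2)^l)^n$, the probability that the random start plan is already locally optimal. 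Convergence of the series is not an issue: since $(1/2)^{li}\to 0$ we have $u(n,i)\to 1$, so the factors $1-u(n,j)$ tend to $0$ and the tail products decay faster than any geometric sequence.

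I expect the main obstacle to be not the arithmetic but the honest justification of the two independence claims -- that a neighbor's domination of $p_t$ is independent across the $n$ neighbors, and that the ``am I locally optimal?'' outcomes are independent across the steps of one climb. In reality $p_{t-1}$ is itself among the neighbors of $p_t$ and is known not to dominate it, and neighbor costs at consecutive steps are correlated because consecutive plans differ by a single local transformation. The argument therefore has to lean on the stated simplifications (exactly $n$ neighbors, independent cost metrics, neighbor costs treated as fresh random variables) rather than derive them: it is a computation within the model, not a statement about the real plan space.
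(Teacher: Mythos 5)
Your proposal is correct and follows essentially the same route as the paper's proof: transitivity of dominance reduces ``some neighbor dominates $p_i$'' to ``some neighbor dominates all $i$ plans on the path,'' the preceding lemma gives the stopping probability $u(n,i)$, and the product form plus the expectation sum follow from the (modelled) independence across steps. Your additional remarks on where the independence assumptions are idealizations and on convergence of the series go slightly beyond what the paper states explicitly, but the core argument is identical.
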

\begin{proof}
Our hill climbing algorithm visits a sequence of plans such that each plan is a neighbor of its predecessor and dominates its predecessor. Pareto dominance is a transitive relation and hence, as each plan dominates its immediate predecessor, each plan dominates all its predecessors. Then the probability of one additional step corresponds to the probability that at least one of the neighbors of the current plan dominates all plans encountered on the path so far. The probability that a local optimum is reached after $i$ plan nodes is the probability that none of the $n$ neighbors of the $i$-th plan dominates all $i$ plans on the path which is $u(n,i)$. The a-priori probability for visiting $i$ plans in total is then $u(n,i)\cdot \prod_{j=1..i-1}(1-u(n,j))$. The expected number of visited plans is $\sum_{i=1..\infty}i\cdot u(n,i)\cdot \prod_{j=1..i-1}(1-u(n,j))$.
\end{proof}

We bound the formula given by Theorem~\ref{nrNodesTh}.

\begin{theorem}
The expected path length from a random plan to the next local Pareto optimum is in $O(n)$.\label{pathLengthTheorem}
\end{theorem}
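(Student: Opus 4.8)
The plan is to bound the expected path length $L = \sum_{i=1}^{\infty} i\cdot u(n,i)\cdot \prod_{j=1}^{i-1}(1-u(n,j))$ from Theorem~\ref{nrNodesTh} by splitting the sum at a threshold that is linear in $n$, say $i^* = cn$ for a suitable constant $c$ depending only on $l$. For the low range $i \le i^*$ the contribution is trivially at most $i^* = O(n)$, since $\sum_{i} u(n,i)\prod_{j<i}(1-u(n,j))$ is a probability distribution over the stopping time and hence sums to at most $1$. The real work is the tail $\sum_{i > i^*} i\cdot u(n,i)\cdot \prod_{j=1}^{i-1}(1-u(n,j))$, which I would show is $O(1)$ — in fact exponentially small — so that the two pieces together give $L = O(n)$.

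First I would analyze $u(n,j) = (1-(1/2)^{lj})^n$. For the tail we only ever use $j$ in the range $[i^*,\,i)$, so the relevant quantity is how small $1-u(n,j)$ is once $j$ is of order $n$. Writing $1-u(n,j) = 1-(1-2^{-lj})^n$ and using $(1-x)^n \ge 1-nx$, we get $1-u(n,j) \le n\cdot 2^{-lj}$. Thus for every $j \ge i^* = cn$ we have $1-u(n,j) \le n\cdot 2^{-lcn}$, which for any fixed $c>0$ decays faster than any polynomial and is in particular $\le 1/2$ (indeed $\le e^{-\Theta(n)}$) for all $n$ beyond a constant. Hence the product $\prod_{j=1}^{i-1}(1-u(n,j))$ for $i > i^*$ is at most $\bigl(n\,2^{-lcn}\bigr)^{\,i-1-i^*}$ — a geometric factor with ratio $q(n) := n\,2^{-lcn} \to 0$.

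With that estimate, the tail is bounded by $\sum_{i>i^*} i \cdot q(n)^{\,i-1-i^*}$ (using $u(n,i)\le 1$), which is a standard arithmetic–geometric series. Reindexing with $k = i - i^*$, it equals $\sum_{k\ge 1}(i^*+k)\,q(n)^{k-1} = i^* \sum_{k\ge 1} q(n)^{k-1} + \sum_{k\ge 1} k\, q(n)^{k-1} = \frac{i^*}{1-q(n)} + \frac{1}{(1-q(n))^2}$. Since $q(n)\to 0$ we have $1-q(n) \ge 1/2$ for all but finitely many $n$, so this is at most $2i^* + 4 = O(n)$ — and in fact, because $q(n)$ is exponentially small, one can get the sharper $O(1)$ bound on the tail if desired, but $O(n)$ already suffices. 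Combining the low range ($O(n)$) with the tail ($O(n)$) yields $L = O(n)$, and I would finish by noting the finitely many small values of $n$ are absorbed into the constant.

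The main obstacle is choosing the threshold $i^*$ correctly and making the $u(n,j)$ estimate uniform. The subtlety is that $u(n,i)$ is close to $0$ (not $1$) for small $i$ — when $i$ is tiny, $(1/2)^{li}$ is a constant and so is $u(n,i)$ after the $n$-th power only if $i$ grows with $n$; concretely $u(n,i)\approx 1$ needs $n\,2^{-li}$ small, i.e. $i \gtrsim (\log n)/l$. So the stopping-time distribution is genuinely concentrated around $i$ of order $\log n$, well below the linear threshold, which is exactly why the linear bound is comfortable. The one place to be careful is that for $j$ between $(\log n)/l$ and $cn$ the factors $1-u(n,j)$ are small but I am not using a uniform bound there — I handle that range simply by the "it's a probability distribution, so the partial sum is $\le 1$" argument, which is why splitting the sum rather than bounding every factor is the right move.
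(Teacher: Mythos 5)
Your proof is correct and follows essentially the same route as the paper's: split the stopping-time expectation at a threshold linear in $n$, bound the head trivially by the threshold (since the stopping probabilities sum to at most $1$), and show the tail is negligible because the factors $1-u(n,j)$ are tiny once $j$ is of order $n$. The only cosmetic difference is that you establish the smallness of the tail factors via Bernoulli's inequality, $1-u(n,j)\le n\,2^{-lj}$, whereas the paper uses anti-monotonicity of $1-u(n,j)$ in $j$ together with the limit $(1-1/n)^n\to e^{-1}$ to conclude that $1/u(n,n)^2=O(1)$; both yield the same geometric-series bound on the tail.
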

\begin{proof}
Assume that the hill climbing algorithm encounters at least $n$ plans on its path to the local Pareto optimum. We can write the expected number of additional steps as $\sum_{i=0..\infty}i\cdot u(n,n+i)\cdot \prod_{j=0..i-1}(1-u(n,j+n))$. Note that each additional step requires that one of the neighbors dominates at least $n$ plans on the path. Since $u(n,j)\leq1$, we can upper-bound the additional steps by $\sum_{i=0..\infty}i\cdot \prod_{j=0..i-1}(1-u(n,j+n))$. Since $(1-u(n,j))$ is anti-monotone in $j$, we can upper-bound that expression by $\sum_{i=0..\infty}i\cdot (1-u(n,n))^{i}$. This expression contains the first derivative of the infinite geometric series and we obtain $(1-u(n,n))/(1-(1-u(n,n)))^2\leq 1/(1-(1-u(n,n)))^2=1/u(n,n)^2$ for the additional steps. We study how quickly that expression grows in $n$. It is $1/u(n,n)^2=1/(1-(1/2)^{ln})^{2n}\leq 1/(1-(1/2)^{n})^{2n}\in O((1/(1-1/n)^n)^2)=O((1/e)^2)=O(1)$. The expected number of additional steps after $n$ steps is a constant.
\end{proof}

We finally calculated the expected time complexity of local search. Note that we make the pessimistic assumption that only one mutation is applied per path step while our algorithm allows in fact to apply many transformations together in one step.

\begin{theorem}
Function~\textproc{ParetoClimb} has expected time complexity in $O(n^2)$.
\end{theorem}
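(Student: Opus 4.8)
The plan is to combine the per-step cost bound of Lemma~\ref{neighborsComplexityLemma} with the expected path-length bound of Theorem~\ref{pathLengthTheorem} via linearity of expectation. First I would observe that each iteration of the \textbf{while} loop in function~\textproc{ParetoClimb} consists of exactly one call to function~\textproc{ParetoStep} followed by a constant number of operations: scanning the returned set of mutations for a plan $pm$ with $pm\prec p$, and, if one is found, reassigning $p$ and setting the $improving$ flag. By Lemma~\ref{neighborsComplexityLemma} the call to \textproc{ParetoStep} costs $O(n)$; each dominance test between two cost vectors costs $O(l)=O(1)$, and under the stated assumptions the set returned by \textproc{ParetoStep} contains only $O(1)$ plans (one per output representation, with the number of representations treated as a constant). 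Hence the cost of a single loop iteration is deterministically $O(n)$, independent of the random choices made during the climb.

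Second I would bound the number of loop iterations. Every iteration except the last replaces $p$ by a strictly dominating neighbor, so the number of such iterations equals the number of edges on the hill-climbing path from the random start plan to the local optimum, i.e.\ one less than the number of plans visited; the final iteration merely discovers that no improving neighbor exists and terminates the loop. Thus the total number of iterations is the path length plus one, whose expectation is $O(n)+1=O(n)$ by Theorem~\ref{pathLengthTheorem}.

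Finally, since the per-iteration cost is a fixed $O(n)$ bound that holds on every execution, while the number of iterations is a random variable with expectation $O(n)$, linearity of expectation yields an expected total running time of $O(n)\cdot O(n)=O(n^2)$, as claimed. The only place requiring a little care is keeping this last step clean: the randomness lives entirely in the \emph{number} of iterations, and the work done within an iteration is upper-bounded by a deterministic quantity, so one may simply multiply the expected iteration count by the worst-case per-iteration cost without worrying about correlations. I would also remark explicitly that Theorem~\ref{pathLengthTheorem} already incorporates the simple statistical model of plan costs and the simplifying assumption of exactly $n$ neighbors per plan, so no further probabilistic reasoning is needed at this point; and, as noted in the statement, the bound is pessimistic because \textproc{ParetoStep} may in fact apply many beneficial transformations in a single step, which can only decrease the number of iterations.
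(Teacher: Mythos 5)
Your proposal is correct and follows the same route as the paper, which simply combines the $O(n)$ per-step bound of Lemma~\ref{neighborsComplexityLemma} with the $O(n)$ expected path length of Theorem~\ref{pathLengthTheorem}. Your additional care in noting that the per-iteration cost is deterministically bounded (so multiplying the expected iteration count by the worst-case step cost is legitimate) is a welcome clarification but does not change the argument.
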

\begin{proof}
We combine the expected path length (see Theorem~\ref{pathLengthTheorem}) with the complexity per step (see Lemma~\ref{neighborsComplexityLemma}).
\end{proof}

At this point it might be interesting to compare the overhead of hill climbing, calculated before, with the benefit it provides by reducing the search space. The factor by which the search space size is reduced when focusing on local optima is given by the following lemma.

\begin{lemma}
The probability that a randomly selected plan is a local Pareto optimum is in $O((1-(1/2)^l)^{n})$. \label{probLocalOptLemma}
\end{lemma}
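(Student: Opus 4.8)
The plan is to use the same simple statistical model as in the rest of Section~\ref{analysisSec}, together with the per-pair dominance probability established in Lemma~\ref{domProbLemma}. First I would recall the characterization of local optimality induced by our hill-climbing neighborhood: a plan $p$ is a local Pareto optimum precisely when none of its neighbors strictly dominates $p$. As in the analysis leading to Theorem~\ref{nrNodesTh}, I assume each plan has exactly $n$ neighbors, so ``$p$ is a local Pareto optimum'' is the intersection of the $n$ events $E_k = $ ``the $k$-th neighbor of $p$ does not dominate $p$'', for $k=1,\dots,n$.

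Next I would compute the probability of this intersection. By Lemma~\ref{domProbLemma}, a single neighbor dominates $p$ with probability $(1/2)^l$, hence $\Pr[E_k] = 1-(1/2)^l$. Invoking the independence assumption of the model --- the cost vector of $p$ and the cost vectors of its $n$ neighbors are drawn independently, and the $l$ components within each are independent --- the events $E_1,\dots,E_n$ are independent, so $\Pr\bigl[\bigcap_{k=1}^{n} E_k\bigr] = \prod_{k=1}^{n}\bigl(1-(1/2)^l\bigr) = \bigl(1-(1/2)^l\bigr)^{n}$. Since $l$ is treated as a constant, $1-(1/2)^l$ is a fixed number in $(0,1)$, and this expression lies in $O\bigl((1-(1/2)^l)^{n}\bigr)$, which is exactly the claim; in fact under the model the bound is tight.

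No non-trivial calculation is involved, so I do not expect a serious obstacle. The only point that needs care is the justification that the $n$ neighbor-dominance events are independent: strictly speaking this follows from the same simplifying i.i.d.\ assumption on plan costs that already underlies Lemma~\ref{domProbLemma} and Theorem~\ref{nrNodesTh}, so I would simply make that assumption explicit at the start of the proof rather than re-derive it. As a consistency check, note that the resulting probability $(1-(1/2)^l)^{n}$ decays exponentially in $n$, in line with the linear expected path length of Theorem~\ref{pathLengthTheorem} (a random plan is exponentially unlikely to already be locally optimal).
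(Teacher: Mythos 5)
Your proposal is correct and follows essentially the same route as the paper's proof: both treat each of the $O(n)$ neighbors as a random plan that dominates $p$ with probability $(1/2)^l$ (Lemma~\ref{domProbLemma}), assume independence across neighbors, and multiply to obtain $(1-(1/2)^l)^{n}$. Your explicit statement of the independence assumption is a welcome clarification but does not change the argument.
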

\begin{proof}
A plan is a local Pareto optimum if none of its neighbors dominates the plan. The neighbor plans are created by applying a constant number of mutations at each plan node. For a plan joining $n$ tables, the number of neighbors is therefore in $O(n)$. We simplify by assuming that the probability that a plan is dominated by one of its neighbors corresponds to the probability that it is dominated by a random plan which is $(1/2)^l$ according to Lemma~\ref{domProbLemma}. The probability that a plan is not dominated by one of its neighbors is $1-(1/2)^l$ and the probability that it is not dominated by any of its neighbors is in $O((1-(1/2)^l)^{n})$.
\end{proof}

Multi-objective hill climbing leads to an exponential reduction of search space size while the expected path length to the next local Pareto optimum grows only linearly in the number of query tables. Next we analyze the complexity of generating a frontier approximation. We denote by $\alpha$ the precision factor that is used in the analyzed iteration. The next lemma is based on a proof from prior work~\cite{Trummer2014} (Lemma~2 in the previous paper) that bounds the number of query plans such that no plan approximately dominates another. Using the same notation, we denote by $m$ the cardinality of the largest base table in the current database.

\begin{lemma}
The plan cache associates $O((n\log_{\alpha}m)^{l-1})$ \\plans with a table set of cardinality $n$.
\end{lemma}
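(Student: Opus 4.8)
The plan is to reduce the bound to a geometric counting argument over a logarithmically spaced grid in cost space, re-using the structure of Lemma~2 in~\cite{Trummer2014}. First I would bound the range of attainable cost values for a fixed table set $s$ with $|s|=n$. Any query plan joining $s$ has $O(n)$ nodes, and every intermediate result it produces contains at most $m^{n}$ tuples (a join of at most $n$ base tables of cardinality at most $m$). For the cost models admitted in Section~\ref{modelSec} --- where the cost at a node is a weighted sum or product over the costs and cardinalities of its inputs --- it then follows that each coordinate of $p.cost$ lies in an interval $[c_{\min},c_{\max}]$ with $c_{\min}$ a positive constant and $c_{\max}=m^{O(n)}$, so that $\log_\alpha(c_{\max}/c_{\min})=O(n\log_\alpha m)$.

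Next I would overlay on $\mathbb{R}^{l}$ the grid whose cell boundaries along each axis are the powers $\alpha^{0},\alpha^{1},\alpha^{2},\dots$ of $\alpha$. By the previous step, the cost vectors of all plans cached for $s$ lie in a box crossed by only $O(n\log_\alpha m)$ boundaries per axis, hence the projection of this grid onto the first $l-1$ coordinates has $O((n\log_\alpha m)^{l-1})$ cells. The core of the proof is that the cache stores at most one plan per such $(l-1)$-dimensional column. This rests on the invariant --- the same one exploited by Lemma~2 of~\cite{Trummer2014}, and maintained by function~\textproc{Prune} in Algorithm~\ref{frontierAlg}, which inserts a plan only if no plan already present $\alpha$-approximately dominates it --- that no cached plan for $s$ $\alpha$-approximately dominates another cached plan for $s$. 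Indeed, if two cached plans $p_1,p_2$ fell in the same column, then for each of the first $l-1$ metrics their costs would lie in a common interval $[\alpha^{k},\alpha^{k+1})$ and thus differ by at most a factor $\alpha$; picking $p_1$ to be the one with smaller cost in metric $l$ would give $p_1.cost\preceq\alpha\cdot p_2.cost$, i.e.\ $p_1\preceq_{\alpha}p_2$, contradicting the invariant. Hence the number of cached plans for $s$ is at most the number of columns, $O((n\log_\alpha m)^{l-1})$ with $l$ treated as a constant.

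I expect the delicate step to be the invariant rather than the counting. \textproc{Prune} physically removes only plans that a new plan dominates \emph{exactly} (factor~$1$), and plans for $s$ are generated across many iterations of the main loop in Algorithm~\ref{mainAlg}, with the approximation factor $\alpha$ shrinking over time; a rigorous argument must show that at the instant each surviving plan was last inserted every other survivor was already present --- so a survivor cannot $\alpha$-approximately dominate a later-inserted survivor --- and must absorb the variation of $\alpha$ across iterations (e.g.\ by invoking in the bound the coarsest factor ever used, exploiting that $\alpha$ is monotone in the iteration count). A secondary point to nail down is the $c_{\max}=m^{O(n)}$ estimate on intermediate-result sizes and plan costs, which has to be made precise for the concrete cost metrics of the formal model in order to legitimize the $\log_\alpha m$ factor.
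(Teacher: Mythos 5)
Your proposal takes essentially the same route as the paper: the paper's proof is a two-line reduction that notes the insertion condition of \textproc{Prune} and then directly cites the bound of Lemma~2 in \cite{Trummer2014} on plan sets in which no plan $\alpha$-approximately dominates another; your logarithmic-grid counting argument is precisely that cited lemma's proof, unfolded, together with the $c_{\max}=m^{O(n)}$ range estimate that legitimizes the $\log_\alpha m$ factor. The one genuine difference is that you explicitly flag the maintenance of the no-mutual-$\alpha$-domination invariant as the delicate step, and that concern is real --- the paper does not address it at all. \textproc{Prune} only guarantees the one-sided property that a plan already cached does not $\alpha$-approximately dominate a later-inserted plan (and, since $\alpha$ shrinks with the iteration count, this does transfer to the current, smallest $\alpha$, as you note); it does not prevent a later-inserted plan from $\alpha$-approximately dominating an earlier survivor that it fails to dominate with factor $1$, so two cached plans can in principle share a grid column. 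Your sketched repair (that every survivor was already present when each other survivor was inserted) cannot hold for pairs in the wrong temporal order, and with only the one-sided invariant your per-column argument bounds the column occupancy not by $1$ but by a chain of length $O(n\log_\alpha m)$ along the $l$-th coordinate, costing one extra factor over the stated exponent $l-1$. So your write-up correctly isolates the weak point of this lemma, but neither your sketch nor the paper's one-line citation actually closes it.
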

\begin{proof}
New plans are only inserted into the plan cache if they are not approximately dominated by other plans joining the same tables, using approximation factor $\alpha$. The bound of $O((n\log_{\alpha}m)^{l-1})$ on the number of plans joining $n$ tables such that no plan approximately dominates another~\cite{Trummer2014} applies therefore to the plan cache.
\end{proof}

We denote by $b(n)$ the asymptotic bound on the number of plans stored per table set. Note that $b(n)$ grows monotonically in $n$.

\begin{theorem}
Function~\textproc{ApproximateFrontier} has time complexity $O(n\cdot b(n)^3)$.
\end{theorem}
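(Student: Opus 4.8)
The plan is to bound the running time of \textproc{ApproximateFrontiers} by multiplying (i) the number of recursive invocations triggered by a single top-level call on the locally optimal plan $p$ with (ii) the work done inside one invocation, not counting its recursive sub-calls. For (i) the count is immediate: a query plan is a binary tree whose $n$ leaves are the scanned tables, so it has $2n-1$ nodes, and \textproc{ApproximateFrontiers} recurses exactly once per node — on $p.outer$ and $p.inner$ at a join node and not at all at a scan node. Hence there are $O(n)$ invocations, and it suffices to show that each invocation does $O(b(n)^3)$ work.

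For (ii), take an invocation on a join node $p$. Choosing $\alpha$ is $O(1)$ (the exponent $\lfloor i/25\rfloor$ makes it at worst $O(\log i)$, which is dominated and could in any case be hoisted out of the recursion). The body then runs the double loop over $P[p.outer.rel]$ and $P[p.inner.rel]$. Both $p.outer.rel$ and $p.inner.rel$ have cardinality at most $n$, so by the plan-cache size bound established above together with the monotonicity of $b$, each of these cached frontiers holds $O(b(n))$ plans, giving $O(b(n)^2)$ loop iterations; the innermost loop over \textproc{JoinOps} contributes only a constant factor (we assume one join operator, and in general their number does not depend on $n$). For each combination we build one plan with \textproc{JoinPlan} and evaluate its $l=O(1)$ cost components, then call \textproc{Prune}. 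A call to \textproc{Prune} scans the set $P[p.rel]$ a constant number of times — once to test whether some stored plan $\alpha$-dominates the new plan, once to delete the stored plans the new plan dominates — with an $O(l)=O(1)$ comparison each time; since, again by that size bound, $P[p.rel]$ never exceeds $O(b(n))$ plans, \textproc{Prune} costs $O(b(n))$. Thus a join invocation costs $O(b(n)^2)\cdot O(b(n))=O(b(n)^3)$.

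A scan invocation only iterates over a constant number of scan operators, each followed by one $O(b(n))$ call to \textproc{Prune}, so it costs $O(b(n))$, which is dominated by the join case. Multiplying the $O(n)$ invocations by the $O(b(n)^3)$ per-invocation bound then yields total time $O(n\cdot b(n)^3)$, as claimed.

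The argument is essentially bookkeeping; the only points needing a little care are the two size bounds used above. First, that every cached frontier accessed during the recursion has size $O(b(n))$: this follows from the preceding plan-cache lemma applied to the table sets $p.outer.rel$, $p.inner.rel$, and $p.rel$, each of cardinality at most $n$, together with the stated monotonicity of $b(n)$. Second, that \textproc{Prune} keeps $P[p.rel]$ within this bound throughout: this is exactly the content of that lemma, since \textproc{Prune} only ever inserts a plan that is not approximately dominated by the plans already stored and only ever deletes plans. I expect no genuine obstacle beyond making these two observations precise.
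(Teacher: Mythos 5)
Your proposal is correct and follows essentially the same route as the paper's proof: $O(n)$ plan nodes, $O(b(n)^2)$ new plans generated per join node, and $O(b(n))$ work per call to \textproc{Prune}, giving $O(n\cdot b(n)^3)$ overall. Your version merely spells out a few details the paper leaves implicit (the cost of computing $\alpha$, the constant-size operator loops, and the appeal to the cache-size lemma with monotonicity of $b$), none of which changes the argument.
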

\begin{proof}
The function treats each plan node in bottom-up order. For each node a set of new plans is generated and pruned by comparing it with alternative plans from the cache joining the same tables. We retrieve plans joining at most $n$ tables such that the number of plans retrieved for each table set is bounded by $b(n)$. Comparing one new plan against $b(n)$ stored plans is in $O(b(n))$. The complexity for treating inner nodes of the query plan tree (representing joins) is higher than the complexity of treating leaf nodes (representing scans). We generate $O(b(n)^2)$ new plans for an inner node which yields a per-node complexity of $O(b(n)^3)$ when taking pruning into account. Summing over all query plan nodes leads to the postulated complexity.
\end{proof}

The operation with dominant time complexity is the generation of the approximate frontier. This justifies that we start with a coarse-grained precision factor in order to explore a sufficient number of join orders quickly. The per-iteration complexity of RMQ follows immediately.

\begin{corollary}
RMQ has time complexity $O(n\cdot b(n)^3)$ per iteration.
\end{corollary}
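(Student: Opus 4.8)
The plan is to sum the running times of the three phases executed within one iteration of Algorithm~\ref{mainAlg} and to check that the frontier-approximation phase dominates. First I would collect the per-phase bounds already established in this section: random plan generation, \textproc{RandomPlan}, runs in $O(n)$ time by Lemma~\ref{randomComplexityLemma}; local search, \textproc{ParetoClimb}, runs in expected $O(n^2)$ time (the theorem above combines the expected $O(n)$ path length of Theorem~\ref{pathLengthTheorem} with the $O(n)$ per-step cost of Lemma~\ref{neighborsComplexityLemma}); and frontier approximation, \textproc{ApproximateFrontier}, runs in $O(n\cdot b(n)^3)$ time. Since the three phases are executed one after the other, the expected time of a single iteration is $O(n+n^2+n\cdot b(n)^3)$.

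Second, I would argue that the last summand asymptotically dominates the other two. Since $b(n)=O((n\log_{\alpha}m)^{l-1})$ and $l\geq 2$ in the multi-objective setting, $b(n)$ grows at least linearly in $n$; hence $b(n)^3=\Omega(n^3)$ and $n\cdot b(n)^3=\Omega(n^4)$, which exceeds both $n^2$ and $n$. It follows that $n+n^2+n\cdot b(n)^3=O(n\cdot b(n)^3)$, which is exactly the claimed per-iteration bound. Because \textproc{ApproximateFrontier} has deterministic running time and dominates the (expected) cost of the remaining phases, the bound also holds for the expected per-iteration time.

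I do not expect a genuine obstacle here, as the corollary is an immediate consequence of the per-phase results; the text preceding the statement already notes that the bound ``follows immediately.'' The only point that needs a moment of care is the dominance comparison between the expected $O(n^2)$ cost of hill climbing and the $O(n\cdot b(n)^3)$ cost of frontier approximation, which relies on the observation recorded just before the statement that $b(n)$ grows monotonically in $n$ and, for $l\geq 2$, at least linearly in $n$. If one wished to be thorough about the degenerate single-metric case $l=1$, one would note that there $b(n)=O(1)$, so the iteration cost reduces to $O(n^2)$ rather than $O(n\cdot b(n)^3)$; but since the paper treats $l$ as a fixed constant of the multi-objective problem (with $l\geq 2$), that case lies outside the intended scope.
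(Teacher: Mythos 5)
Your proof is correct and follows exactly the paper's intended argument: sum the per-phase bounds ($O(n)$ for random plan generation, expected $O(n^2)$ for Pareto climbing, $O(n\cdot b(n)^3)$ for frontier approximation) and observe that the last term dominates, which is all the paper means by ``follows immediately.'' Your extra care about the degenerate $l=1$ case and the expected-versus-deterministic distinction is more explicit than the paper, but it does not change the route.
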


We finally analyze the space complexity of RMQ. We analyze the accumulated space consumed after $i$ iterations. We assume that $b(n)$ designates the bound on the number of plans cached per table set for the precision factor $\alpha$ that is reached after $i$ iterations. 

\begin{theorem}
RMQ has space complexity $O(i\cdot n\cdot b(n))$.
\end{theorem}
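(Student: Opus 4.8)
The plan is to show that the accumulated space after $i$ iterations is dominated by the plan cache $P$, and then to bound the size of $P$ as the product of three quantities: the number of distinct table sets it indexes, the number of plans it stores per table set, and the space needed to physically represent one cached plan. First I would observe that everything $\textproc{RMQ}$ carries between iterations resides in $P$: the plan returned by $\textproc{RandomPlan}$, the working plans manipulated inside $\textproc{ParetoClimb}$, and the candidate plans $np$ generated inside $\textproc{ApproximateFrontiers}$ are all transient. Each such transient object is a query plan over at most $n$ tables and hence occupies $O(n)$ space, and $\textproc{ApproximateFrontiers}$ in Algorithm~\ref{frontierAlg} generates its candidates one at a time and immediately prunes each into $P$, so the transient working set never exceeds $O(n)$ beyond the cache itself. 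Thus the accumulated space is $O(n)$ plus the size of $P$.

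Next I would bound the number of table sets indexed by $P$. $\textproc{ApproximateFrontiers}$ only ever inserts plans for table sets $p.rel$ occurring as nodes of the locally optimal plan it is given, and a query plan over $n$ tables is a binary tree with $O(n)$ nodes; hence one iteration introduces at most $O(n)$ distinct table sets, and after $i$ iterations $P$ indexes $O(i\cdot n)$ of them. For each indexed table set the preceding lemma bounds the number of stored plans by $b(n)$, evaluated at the precision factor $\alpha$ reached after $i$ iterations as the statement assumes, because $\textproc{Prune}$ in Algorithm~\ref{frontierAlg} retains only plans none of which approximately dominates another.

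The remaining step, and the one that needs care, is to argue that each cached plan can be stored in $O(l)=O(1)$ space rather than $O(n)$, by exploiting structure sharing. In $\textproc{ApproximateFrontiers}$ every new join plan $\textproc{JoinPlan}(outer,inner,op)$ is assembled from an $outer$ and an $inner$ that are themselves entries of $P$, and every new scan plan is a leaf; so a cached plan is physically a single tree node holding an operator label, a cost vector in $\mathbb{R}^l$, and two references to already-cached sub-plans, which is $O(l)=O(1)$ incremental space. Multiplying the three bounds yields $O(i\cdot n)\cdot O(b(n))\cdot O(1)=O(i\cdot n\cdot b(n))$, which dominates the $O(n)$ transient overhead and gives the claim.

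I expect the structure-sharing step to be the main obstacle: the bound is simply false if cached plans were stored as stand-alone trees, which would give $O(i\cdot n^2\cdot b(n))$, so the proof must explicitly rely on the cache being organized so that every sub-plan of a stored plan is itself a cache entry. The other ingredients, namely the counting of table sets, the appeal to the per-table-set bound $b(n)$, and the transient-space estimate, are routine and follow directly from the pseudo-code.
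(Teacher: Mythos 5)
Your proposal is correct and follows essentially the same argument as the paper: $O(n)$ new table sets per iteration, at most $b(n)$ plans cached per table set, and $O(1)$ incremental space per cached plan because its sub-plans are themselves cache entries (the paper states this as ``its sub-plans are already stored''). Your additional remarks on transient working space and the explicit warning about why structure sharing is essential are sound elaborations of the same route rather than a different proof.
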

\begin{proof}
Each plan generates $O(n)$ intermediate results. We approximate the Pareto frontiers of all intermediate results that are used by one locally optimal plan in each iteration. Each iteration therefore adds at most $O(n)$ plan sets to the plan cache. The number of plans cached for each intermediate result is bounded by $b(n)$ and each plan requires $O(1)$ space (as its sub-plans are already stored). 
\end{proof}

Considering multiple operator implementations changes time but not space complexity. We denote the number of implementations per operator by $r$. The asymptotic number of neighbor plans multiplies by $r$ and so does the expected path length to the nearest local optimum (this can be seen by substituting $n$ by $r\cdot n$ in Theorems~\ref{nrNodesTh} and \ref{pathLengthTheorem}). Hence the time complexity of local search multiplies by $r^2$. The time complexity of the frontier approximation multiplies by $r$ as we iterate over the operators in the innermost loop. Note that the number of plan cost metrics is traditionally treated as constant in multi-objective query optimization~\cite{Trummer2014, Trummer2015a}. 

\def\addAllPlots#1#2#3#4{
\addplot+[unbounded coords=jump, mark=Mercedes star, draw=black] table[header=true,col sep=tab, x index=0, y index=1] {plotsdata/newRun/scaled/#1M#2T#3#4.txt};
\addplot+[unbounded coords=jump, mark=asterisk, draw=black] table[header=true,col sep=tab, x index=0, y index=2] {plotsdata/newRun/scaled/#1M#2T#3#4.txt};
\addplot+[unbounded coords=jump, mark=x, draw=black] table[header=true,col sep=tab, x index=0, y index=3] {plotsdata/newRun/scaled/#1M#2T#3#4.txt};
\addplot+[unbounded coords=jump, mark=o, mark size=2, draw=blue] table[header=true,col sep=tab, x index=0, y index=4] {plotsdata/newRun/scaled/#1M#2T#3#4.txt};
\addplot+[unbounded coords=jump, mark=diamond*, mark size=1, draw=blue] table[header=true,col sep=tab, x index=0, y index=6] {plotsdata/newRun/scaled/#1M#2T#3#4.txt};
\addplot+[unbounded coords=jump, mark=triangle, mark size=2, draw=brown] table[header=true,col sep=tab, x index=0, y index=7] {plotsdata/newRun/scaled/#1M#2T#3#4.txt};
\addplot+[unbounded coords=jump, mark=square, mark size=1, draw=blue] table[header=true,col sep=tab, x index=0, y index=8] {plotsdata/newRun/scaled/#1M#2T#3#4.txt};
\addplot+[unbounded coords=jump, mark=x, mark size=2, draw=red] table[header=true,col sep=tab, x index=0, y index=9] {plotsdata/newRun/scaled/#1M#2T#3#4.txt};
}

\def\plotTitle#1#2#3{
\node at (group c#1r#2.north) [yshift=0.15cm] {#3};
}

\def\performancePlot#1#2{
\begin{groupplot}[group style={group size=3 by 5, x descriptions at=edge bottom, horizontal sep=1.5cm, vertical sep=0.4cm}, 
width=6cm, height=3.75cm,
xlabel=Time (s), ylabel=$\alpha$ (Log axis), xlabel near ticks, ylabel near ticks,
ymode=log, xmajorgrids, xmin=0, xmax=2.9, ymin=1,
log basis y=100, ymajorgrids]
\nextgroupplot
\addAllPlots{#1}{10}{chain}{#2}
\nextgroupplot
\addAllPlots{#1}{10}{cycle}{#2}
\nextgroupplot
\addAllPlots{#1}{10}{star}{#2}
\nextgroupplot
\addAllPlots{#1}{25}{chain}{#2}
\nextgroupplot
\addAllPlots{#1}{25}{cycle}{#2}
\nextgroupplot
\addAllPlots{#1}{25}{star}{#2}
\nextgroupplot
\addAllPlots{#1}{50}{chain}{#2}
\nextgroupplot
\addAllPlots{#1}{50}{cycle}{#2}
\nextgroupplot
\addAllPlots{#1}{50}{star}{#2}
\nextgroupplot
\addAllPlots{#1}{75}{chain}{#2}
\nextgroupplot
\addAllPlots{#1}{75}{cycle}{#2}
\nextgroupplot
\addAllPlots{#1}{75}{star}{#2}
\nextgroupplot
\addAllPlots{#1}{100}{chain}{#2}
\nextgroupplot
\addAllPlots{#1}{100}{cycle}{#2}
\nextgroupplot[legend columns=8, legend to name=thisPlotLegend]
\addAllPlots{#1}{100}{star}{#2}
\addlegendentry{DP(Infinity)}
\addlegendentry{DP(1000)}
\addlegendentry{DP(2)}
\addlegendentry{SA}
\addlegendentry{2P}
\addlegendentry{NSGA-II}
\addlegendentry{II}
\addlegendentry{RMQ}
\end{groupplot}

\plotTitle{1}{1}{Chain, 10 tables}
\plotTitle{1}{2}{Chain, 25 tables}
\plotTitle{1}{3}{Chain, 50 tables}
\plotTitle{1}{4}{Chain, 75 tables}
\plotTitle{1}{5}{Chain, 100 tables}

\plotTitle{2}{1}{Cycle, 10 tables}
\plotTitle{2}{2}{Cycle, 25 tables}
\plotTitle{2}{3}{Cycle, 50 tables}
\plotTitle{2}{4}{Cycle, 75 tables}
\plotTitle{2}{5}{Cycle, 100 tables}

\plotTitle{3}{1}{Star, 10 tables}
\plotTitle{3}{2}{Star, 25 tables}
\plotTitle{3}{3}{Star, 50 tables}
\plotTitle{3}{4}{Star, 75 tables}
\plotTitle{3}{5}{Star, 100 tables}
}

\def\analysisPlot#1#2{
\addplot+[unbounded coords=jump] table[header=true,col sep=tab, x index=0, y index=1] {plotsdata/furtherAnalysis/#1M#2.txt};
\addlegendentry{Chain}
\addplot+[unbounded coords=jump] table[header=true,col sep=tab, x index=0, y index=2] {plotsdata/furtherAnalysis/#1M#2.txt};
\addlegendentry{Star}
\addplot+[unbounded coords=jump] table[header=true,col sep=tab, x index=0, y index=3] {plotsdata/furtherAnalysis/#1M#2.txt};
\addlegendentry{Cycle}
}

\def\analysisGroupPlot#1#2#3#4{
\begin{groupplot}[group style={group size=2 by 1, x descriptions at=edge bottom, horizontal sep=1.5cm}, 
width=4cm, 
xlabel=Nr.\ query tables, xlabel near ticks, ylabel near ticks, 
legend to name=analysisLeg, legend columns=3, xtick=data,
ymajorgrids]
\nextgroupplot[ylabel=#1, height=2.75cm]
\analysisPlot{3}{#2}
\nextgroupplot[ylabel=#3, height=2.75cm]
\analysisPlot{3}{#4}
\end{groupplot}
}

\def\comparisonPlot#1#2#3{
\addplot+[unbounded coords=jump, bar shift=-0.1cm, bar width=0.1cm] table[header=true,col sep=tab, x index=0, y index=7] {plotsdata/furtherAnalysis/#1M#3#2.txt};
\addlegendentry{NSGA-II}
\addplot+[unbounded coords=jump, bar shift=0cm, bar width=0.1cm] table[header=true,col sep=tab, x index=0, y index=8] {plotsdata/furtherAnalysis/#1M#3#2.txt};
\addlegendentry{II}
\addplot+[unbounded coords=jump, bar shift=0.1cm, bar width=0.1cm] table[header=true,col sep=tab, x index=0, y index=9] {plotsdata/furtherAnalysis/#1M#3#2.txt};
\addlegendentry{RMQ}
}

\newlength{\comparisonPlotHeight}
\setlength{\comparisonPlotHeight}{2.75cm}

\def\comparisonGroupPlot#1#2#3{
\begin{groupplot}[group style={group size=2 by 2, x descriptions at=edge bottom, horizontal sep=1.5cm}, width=4cm, height=\comparisonPlotHeight,
xlabel=Nr.\ query tables, xlabel near ticks, ylabel near ticks, 
ymode=#3, ybar, xmin=0, xmax=6,
legend to name=comparisonLeg, legend columns=3,
xtick=data, xticklabels={10, 25, 50, 75, 100}, ymajorgrids]
\nextgroupplot[title={2 Metrics, Chain}, ylabel=#1, height=\comparisonPlotHeight]
\comparisonPlot{2}{#2}{chain}
\nextgroupplot[title={3 Metrics, Chain}, ylabel=#1, height=\comparisonPlotHeight]
\comparisonPlot{3}{#2}{chain}
\nextgroupplot[title={2 Metrics, Star}, ylabel=#1, height=\comparisonPlotHeight]
\comparisonPlot{2}{#2}{star}
\nextgroupplot[title={3 Metrics, Star}, ylabel=#1, height=\comparisonPlotHeight]
\comparisonPlot{3}{#2}{star}
\end{groupplot}
}

\begin{figure*}[t!]
\centering
\begin{tikzpicture}
\performancePlot{2}{MN}
\end{tikzpicture}
\ref{thisPlotLegend}
\caption{Median of approximation error for two cost metrics as a function of optimization time.\label{twoMetricsFig}}
\end{figure*}

\begin{figure*}[t!]
\centering
\begin{tikzpicture}
\performancePlot{3}{MN}
\end{tikzpicture}
\ref{thisPlotLegend}
\caption{Median of approximation error for three cost metrics as a function of optimization time.\label{threeMetricsFig}}
\end{figure*}

\section{Experimental Evaluation}
\label{experimentsSec}

We describe our experimental setup in Section~\ref{experimentalSetupSub} and discuss the results in Section~\ref{experimentalResultsSub}.

\subsection{Experimental Setup}
\label{experimentalSetupSub}

We compare our RMQ algorithm against other algorithms for multi-objective query optimization. We compare algorithms in terms of how well they approximate the Pareto frontier for a given query after a certain amount of optimization time. We measure the approximation quality in regular intervals during optimization to compare algorithms in different time intervals. This allows to identify algorithms that quickly find reasonable solutions as well as algorithms that take longer to produce reasonable solutions but yield a better approximation in the end.

We judge the set of query plans produced by a certain algorithm by the lowest approximation factor $\alpha$ such that the produced plan set is an $\alpha$-approximate Pareto plan set. A lower $\alpha$ means a better approximation of the real Pareto frontier. This quality metric is equivalent to the $\varepsilon$ metric that was recommended in a seminal paper by Zitzler and Thiele~\cite{Zitzler2003} (setting $\alpha=1+\varepsilon$). Choosing that metric also makes our comparison fair since the dynamic programming based approximation schemes~\cite{Trummer2014} against which we compare have been developed for that metric. As it is common in the area of multi-objective optimization, we often deal with test cases where finding the full set of Pareto solutions is computationally infeasible. We therefore compare the output of each algorithm against an approximation of the real Pareto frontier that is obtained by running all algorithms that we describe in the following for three seconds and taking the union of the obtained result plans.

We compare RMQ against two classes of algorithms: dynamic programming based approximation schemes~\cite{Trummer2014} and generalizations of randomized algorithms that have been proposed for single-objective query optimization~\cite{Steinbrunn1997}. The approximation schemes guarantee to produce a Pareto frontier approximation whose $\alpha$ value does not exceed a user-specified threshold. We denote by DP($\alpha$) the approximation scheme with threshold $\alpha$. Choosing a higher value for $\alpha$ decreases optimization time but choosing a lower value leads to better result quality. We report results for different values of $\alpha$ in the following. 

We experiment with four randomized algorithms (in addition to RMQ itself). By II we denote a generalization of iterative improvement~\cite{Steinbrunn1997} in which we iteratively walk towards local Pareto optima in the search space starting from random query plans. By SA we denote a generalization of the SAIO variant of simulated annealing, described by Steinbrunn et al.~\cite{Steinbrunn1997}. The original algorithm uses the difference between the scalar cost value of the current plan and the cost of a randomly selected neighbor to decide whether to move towards the neighbor plan, based additionally on the current temperature. Our generalization uses the average cost difference between the current plan and its neighbor, averaging over all cost metrics. By 2P we abbreviate the two phase optimization algorithm for query optimization~\cite{Steinbrunn1997}. It executes the II algorithm in a first phase and continues with SA in a second phase. We switch to the second phase after ten iterations of II~\cite{Steinbrunn1997} and choose the initial temperature for SA as described by Steinbrunn et al.~\cite{Steinbrunn1997}. By NSGA-II we abbreviate the Non-Dominated Sort Genetic Algorithm II~\cite{Deb2002}, a genetic algorithm for multi-objective query optimization that has been very successful for scenarios with the same number of cost metrics that we consider.

All our algorithms that are based on local search use the plan transformations for bushy query plans that were described by Steinbrunn et al.~\cite{Steinbrunn1997}. All algorithms using hill climbing (II and 2P) use the same efficient climbing function (see Algorithm~\ref{ParetoClimbAlg}) as our own algorithm. NSGA-II uses an ordinal plan encoding and a corresponding single-point crossover~\cite{Steinbrunn1997}. Our implementation of NSGA-II follows closely the pseudo-code given in the original paper~\cite{Deb2000} and we use the same settings for mutation and crossover probabilities. We use populations of size 200. We experimented with several configurations for each of the randomized algorithms (e.g., experimenting with different population sizes for NSGA-II and different number of examined neighbors for II, SA, and 2P) and report for each algorithm only the configuration that led to optimal performance. 

We generate random queries with a given number of tables in the same way as in prior evaluations of query optimization algorithms~\cite{Steinbrunn1997, Bruno}: we experiment with different join graph structures and use stratified sampling to pick table cardinalities, using the same distribution as Steinbrunn et al.~\cite{Steinbrunn1997}. We consider up to three cost metrics on query plans that were already used for other experimental evaluations in the area of multi-objective query optimization~\cite{Trummer2014}: query execution time, buffer space consumption, and disc space consumption. We report in the following plots median values from 20 test cases per data point. For less than three cost metrics, we select the specified number of cost metrics with uniform distribution from the total set of metrics for each test case.

All algorithms were implemented in Java~1.7 and executed using the Java HotSpot(TM) 64-Bit Server Virtual Machine version on an iMac with  i5-3470S 2.90GHz CPU and 16~GB of DDR3 RAM. We run each algorithm consecutively on all test cases after forcing garbage collection and after a ten seconds code warmup on randomly selected test cases in order to benchmark steady state performance\footnote{\url{http://www.ibm.com/developerworks/library/j-benchmark1/}}.

\subsection{Experimental Results}
\label{experimentalResultsSub}

Figure~\ref{twoMetricsFig} reports results for two cost metrics and Figure~\ref{threeMetricsFig} reports our results for three cost metrics. We report separate results for different query sizes (measured by the number of tables being joined) and join graph structures (chain, cycle, and star shape). We allow up to three seconds of optimization time which might seem excessive for single-objective query optimization but is well within the range that multi-objective query optimizers are typically evaluated on~\cite{Trummer2014, Trummer2015a}. The experiments for producing the data shown in Figures~\ref{twoMetricsFig} and \ref{threeMetricsFig} took around eight hours of optimization time.

Dynamic programming based approximation schemes for multi-objective query optimization have so far been evaluated only on queries joining up to around 10 tables~\cite{Trummer2014, Trummer2015, Trummer2015a}. For 10 table queries, DP(2) finds the best frontier approximation among all evaluated algorithms if two cost metrics are considered. For three cost metrics, DP(2) cannot finish optimization within the given time frame. For queries joining 25 tables and more, none of the approximation schemes finishes optimization within the given time frame. Note that the optimization time required by the approximation schemes we compare against constitutes a lower bound for the optimization time required by the incremental approximation algorithm proposed in our prior work~\cite{Trummer2015a} for reaching the same approximation quality. As the non-incremental approximation schemes cannot finish optimization within the given time frame, even when setting $\alpha=\infty$, the incremental versions will not be able to do so either. This is not surprising as we reach a query size where randomized algorithms would already be used for single-objective query optimization which is computationally far less expensive than multi-objective query optimization. This shows the need for randomized algorithms for multi-objective query optimization such as RMQ.

Among the randomized algorithms that generalize popular randomized algorithms for single-objective query optimization (II, SA, 2P, and NSGA-II), II and NSGA-II generate the best approximations with a large gap to SA and 2P (note the logarithmic y axis for approximation error). It is interesting that II outperforms 2P unlike in traditional query optimization where the roles are reversed. However, SA and 2P both spend most of their time improving one single query plan (2P after limited initial sampling) and are therefore intrinsically based on the assumption that only one very good plan needs to be found as result. Approximating the Pareto frontier requires however to generate a diverse set of query plans which is better accomplished by the seemingly naive II which starts each iteration with a new random plan. NSGA-II usually performs better than II, SA, and 2P. This is not surprising since NSGA-II is a popular algorithm for multi-objective optimization with a moderate number of cost metrics and genetic algorithms have been shown to perform well for classical query optimization~\cite{Bennett1991}.

Our own algorithm, RMQ, outperforms all other algorithms in the majority of cases. The gap to the other algorithms increases in the number of query tables and in the number of cost metrics. For two cost metrics (see Figure~\ref{twoMetricsFig}), RMQ is competitive and often significantly better than all other algorithms over the entire optimization time period starting from more than 50 join tables. For star-shaped query graphs, RMQ is better than competing algorithms starting from 25 join tables already. Considering three cost metrics (see Figure~\ref{threeMetricsFig}) increases the gap between RMQ and all other algorithms. Starting from 25 join tables, RMQ dominates over the entire optimization time period. The gap in terms of approximation error reaches many orders of magnitude for large queries.

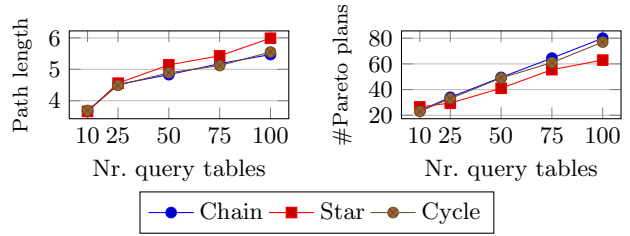
\begin{figure}[t!]
\centering
\begin{tikzpicture}
\analysisGroupPlot{{\small Path length}}{pathLength}{{\small \#Pareto plans}}{nrParetos}
\end{tikzpicture}

\ref{analysisLeg}
\caption{Median path length from random plan to next local Pareto optimum and median number of Pareto plans found by RMQ for three cost metrics.\label{pathParetoFig}}
\end{figure}

Figures~\ref{pathParetoFig} shows additional statistics: on the left side we see that the average path length from a random plan to the nearest Pareto optimum (using function~\textproc{ParetoClimb}) grows slowly in the number of query tables as postulated in our formal analysis. On the right side, we see that the number of Pareto plans grows in the number of query tables which corroborates prior results~\cite{Trummer2014}. This tendency explains why the approximation error of randomized algorithms increases in the query size: having more Pareto plans makes it harder to obtain a good approximation. 


We summarize the main results of our experimental evaluation. Dynamic programming based algorithms for multi-objective query optimization are only applicable for small queries. Using randomized algorithms, we were able to approximate the Pareto frontier for large queries joining up to 100 tables. The algorithm proposed in this paper performs best among the randomized algorithms over a broad range of scenarios. Among the remaining randomized algorithms, a general-purpose genetic algorithm for multi-objective optimization performs best, followed closely by an iterative improvement algorithm that uses the efficient hill climbing function that was introduced in this paper as well. 


So far we have compared algorithms for a few seconds of optimization time. It is generally advantageous to minimize optimization time as it adds to execution time. In the context of multi-objective query optimization, it is even more important as query optimization can be an interactive process in which users have to wait until optimization finishes~\cite{Trummer2015a}. We have however also compared algorithms for up to 30 seconds of optimization time and present the corresponding results in the appendix. Until now we compared algorithms in terms of how well they approximate an approximated Pareto frontier since calculating the real Pareto frontier would lead to prohibitive computational costs for many of the query sizes we consider. We show results for small queries where we calculated the real Pareto frontier in the appendix. Furthermore, the appendix contains results for different query generation methods. The main results of our experiments are stable across all scenarios.

\section{Conclusion}

The applicability of multi-objective query optimization has so far been severely restricted by the fact that all available algorithms have exponential complexity in the number of query tables. We presented, analyzed, and evaluated the first polynomial time heuristic for multi-objective query optimization. We have shown that our algorithm scales to queries that are by one order of magnitude larger than the ones prior multi-objective query optimizers were so far evaluated on. We envision our algorithm being used in future multi-objective query optimizers that apply dynamic programming based algorithms for small queries and switch to randomized algorithms starting from a certain number of query tables, similar to single-objective optimizers today.

\section{Acknowledgments}

This work was supported by ERC grant 279804. I. Trummer is supported by a Google European PhD Fellowship.

\bibliographystyle{abbrv}
\bibliography{../../library}

\appendix

The performance of query optimization algorithms may depend on the probability distribution over predicate selectivity values used during random query generation. For our experiments in Section~\ref{experimentsSec}, we used the original method proposed by Steinbrunn et al.~\cite{Steinbrunn1997} to select the selectivity of join predicates. We performed an additional series of experiments in which we select predicate selectivity according to the MinMax method proposed by Bruno instead~\cite{Bruno}. Using that method, each join has an output cardinality between the cardinalities of the two input relations. The purpose of those additional experiments is to verify whether our experimental results from Section~\ref{experimentsSec} generalize. We report the results of the second series of experiments in Figures~\ref{twoMetricsMinMaxFig} and \ref{threeMetricsMinMaxFig}. The experimental setup is the same as described in Section~\ref{experimentalSetupSub} except for the choice of selectivity values. We focus on queries joining between 25 to 100 tables where randomized algorithms become better than dynamic programming variants.

The results are largely consistent with the results we obtained in our first series of experiments. Our own algorithm outperforms all other approaches significantly for large queries and many cost metrics, in particular during the first second of optimization time. NSGA-II performs well for smaller queries while its approximation error is by many orders of magnitude sub-optimal for large queries. II comes close to NSGA-II in certain scenarios while the two randomized algorithms based on simulated annealing, SA and 2P, perform badly. The approximation schemes do not scale to queries of 25 tables and more.

\def\performancePlotMinMax#1#2{
\begin{groupplot}[group style={group size=3 by 4, x descriptions at=edge bottom, horizontal sep=1.5cm, vertical sep=0.4cm}, 
width=6cm, height=3.25cm,
xlabel=Time (s), ylabel=$\alpha$ (Log axis), xlabel near ticks, ylabel near ticks,
ymode=log, xmajorgrids, xmin=0, xmax=2.9, ymin=1,
log basis y=100, ymajorgrids]
\nextgroupplot
\addAllPlots{#1}{25}{chain}{#2}
\nextgroupplot
\addAllPlots{#1}{25}{cycle}{#2}
\nextgroupplot
\addAllPlots{#1}{25}{star}{#2}
\nextgroupplot
\addAllPlots{#1}{50}{chain}{#2}
\nextgroupplot
\addAllPlots{#1}{50}{cycle}{#2}
\nextgroupplot
\addAllPlots{#1}{50}{star}{#2}
\nextgroupplot
\addAllPlots{#1}{75}{chain}{#2}
\nextgroupplot
\addAllPlots{#1}{75}{cycle}{#2}
\nextgroupplot
\addAllPlots{#1}{75}{star}{#2}
\nextgroupplot
\addAllPlots{#1}{100}{chain}{#2}
\nextgroupplot
\addAllPlots{#1}{100}{cycle}{#2}
\nextgroupplot[legend columns=8, legend to name=thisPlotLegend]
\addAllPlots{#1}{100}{star}{#2}
\addlegendentry{DP(Infinity)}
\addlegendentry{DP(1000)}
\addlegendentry{DP(2)}
\addlegendentry{SA}
\addlegendentry{2P}
\addlegendentry{NSGA-II}
\addlegendentry{II}
\addlegendentry{RMQ}
\end{groupplot}

\plotTitle{1}{1}{Chain, 25 tables}
\plotTitle{1}{2}{Chain, 50 tables}
\plotTitle{1}{3}{Chain, 75 tables}
\plotTitle{1}{4}{Chain, 100 tables}

\plotTitle{2}{1}{Cycle, 25 tables}
\plotTitle{2}{2}{Cycle, 50 tables}
\plotTitle{2}{3}{Cycle, 75 tables}
\plotTitle{2}{4}{Cycle, 100 tables}

\plotTitle{3}{1}{Star, 25 tables}
\plotTitle{3}{2}{Star, 50 tables}
\plotTitle{3}{3}{Star, 75 tables}
\plotTitle{3}{4}{Star, 100 tables}
}

\begin{figure*}[t!]
\centering
\begin{tikzpicture}[declare function={Infinity=inf;}]
\performancePlotMinMax{2}{MinMax}
\end{tikzpicture}
\ref{thisPlotLegend}
\caption{Median of approximation error for two cost metrics as a function of optimization time (MinMax joins).\label{twoMetricsMinMaxFig}}
\end{figure*}

\begin{figure*}[t!]
\centering
\begin{tikzpicture}[declare function={Infinity=inf;}]
\performancePlotMinMax{3}{MinMax}
\end{tikzpicture}
\ref{thisPlotLegend}
\caption{Median of approximation error for three cost metrics as a function of optimization time (MinMax joins).\label{threeMetricsMinMaxFig}}
\end{figure*}

\def\addAllPlotsSpecial#1#2#3#4#5{
\addplot+[unbounded coords=jump, mark=Mercedes star, draw=black] table[header=true,col sep=tab, x index=0, y index=1] {plotsdata/#5/scaled/#1M#2T#3#4.txt};
\addplot+[unbounded coords=jump, mark=asterisk, draw=black] table[header=true,col sep=tab, x index=0, y index=2] {plotsdata/#5/scaled/#1M#2T#3#4.txt};
\addplot+[unbounded coords=jump, mark=x, draw=black] table[header=true,col sep=tab, x index=0, y index=3] {plotsdata/#5/scaled/#1M#2T#3#4.txt};
\addplot+[unbounded coords=jump, mark=o, mark size=2, draw=blue] table[header=true,col sep=tab, x index=0, y index=4] {plotsdata/#5/scaled/#1M#2T#3#4.txt};
\addplot+[unbounded coords=jump, mark=diamond*, mark size=1, draw=blue] table[header=true,col sep=tab, x index=0, y index=6] {plotsdata/#5/scaled/#1M#2T#3#4.txt};
\addplot+[unbounded coords=jump, mark=triangle, mark size=2, draw=brown] table[header=true,col sep=tab, x index=0, y index=7] {plotsdata/#5/scaled/#1M#2T#3#4.txt};
\addplot+[unbounded coords=jump, mark=square, mark size=1, draw=blue] table[header=true,col sep=tab, x index=0, y index=8] {plotsdata/#5/scaled/#1M#2T#3#4.txt};
\addplot+[unbounded coords=jump, mark=x, mark size=2, draw=red] table[header=true,col sep=tab, x index=0, y index=9] {plotsdata/#5/scaled/#1M#2T#3#4.txt};
}

\def\performancePlotLong#1#2{
\begin{groupplot}[group style={group size=3 by 2, x descriptions at=edge bottom, horizontal sep=1.5cm, vertical sep=0.4cm}, 
width=6cm, height=3.25cm,
xlabel=Time (s), ylabel=$\alpha$ (Log axis), xlabel near ticks, ylabel near ticks,
ymode=log, xmajorgrids, xmin=0, xmax=29, ymin=1, ymax=10E10,
log basis y=10, ymajorgrids, ytickten={2,4,6,8,10,12}]
\nextgroupplot
\addAllPlotsSpecial{#1}{50}{chain}{#2}{longRuns}
\nextgroupplot
\addAllPlotsSpecial{#1}{50}{cycle}{#2}{longRuns}
\nextgroupplot
\addAllPlotsSpecial{#1}{50}{star}{#2}{longRuns}
\nextgroupplot
\addAllPlotsSpecial{#1}{100}{chain}{#2}{longRuns}
\nextgroupplot
\addAllPlotsSpecial{#1}{100}{cycle}{#2}{longRuns}
\nextgroupplot[legend columns=8, legend to name=thisPlotLegend]
\addAllPlotsSpecial{#1}{100}{star}{#2}{longRuns}
\addlegendentry{DP(Infinity)}
\addlegendentry{DP(1000)}
\addlegendentry{DP(2)}
\addlegendentry{SA}
\addlegendentry{2P}
\addlegendentry{NSGA-II}
\addlegendentry{II}
\addlegendentry{RMQ}
\end{groupplot}

\plotTitle{1}{1}{Chain, 50 tables}
\plotTitle{1}{2}{Chain, 100 tables}

\plotTitle{2}{1}{Cycle, 50 tables}
\plotTitle{2}{2}{Cycle, 100 tables}

\plotTitle{3}{1}{Star, 50 tables}
\plotTitle{3}{2}{Star, 100 tables}
}

\begin{figure*}[t!]
\centering
\begin{tikzpicture}[declare function={Infinity=inf;}]
\performancePlotLong{2}{MN}
\end{tikzpicture}
\ref{thisPlotLegend}
\caption{Median of approximation error in the interval $[1,10^{10}]$ for two cost metrics and up to 30 seconds of optimization time.\label{twoMetricsLongFig}}
\end{figure*}

\begin{figure*}[t!]
\centering
\begin{tikzpicture}[declare function={Infinity=inf;}]
\performancePlotLong{3}{MN}
\end{tikzpicture}
\ref{thisPlotLegend}
\caption{Median of approximation error in the interval $[1,10^{10}]$ for three cost metrics and up to 30 seconds of optimization time.\label{threeMetricsLongFig}}
\end{figure*}

Multi-objective query optimization needs to integrate user preferences in order to determine the optimal query plan. One possibility to integrate user preferences is to present an approximation of the plan Pareto frontier for a given query to the user such that the user can select the preferred cost tradeoff~\cite{Trummer2015a}. This means that users have to wait after submitting a query until optimization finishes in order to make their selection. In that scenario, an optimization time of only a few seconds is desirable. If users formalize their preferences before optimization starts (e.g., in the form of cost weights and cost bounds~\cite{Trummer2014}) then longer optimization times can be acceptable. The second scenario motivates us to compare the query optimization algorithms for longer periods of optimization time.

We executed another series of experiments giving each algorithm 30 seconds of optimization time. We reduced the number of test cases per scenario from 20 to 10 and only experimented with two query sizes in order to decrease the time overhead for the experiments. Figures~\ref{twoMetricsLongFig} and \ref{threeMetricsLongFig} report the development of the median approximation error over 30 seconds of optimization time. We restrict the y domain of the figures and only show errors up to $\alpha=10^{10}$. Using that thresholds allows us to visualize performance difference between the well performing algorithms that would otherwise become indistinguishable even though they are significant. 
Albeit we ran the same algorithms as in the previous plots on all test cases, the plots only show data points for a subset of those algorithms. For the dynamic programming variants, the reason for not being represented in the plots is that they did not return any results even within 30 seconds of optimization time. For simulated annealing and two-phase optimization, the reason for not being represented in the plots is that their approximation error is significantly above the threshold of $10^{10}$ (often more than $10^{110}$). 

The tendencies remain the same: our randomized algorithm, the genetic algorithm, and iterative improvement with our fast climbing function are the best randomized algorithms. RMQ is usually better than iterative improvement. For queries with up to 50 tables, it depends on the number of cost metrics and the join graph structure whether RMQ or the genetic algorithm perform better. For more than 50 tables, RMQ outperforms all other algorithms over most of the optimization time period, the margin increases in the number of plan cost metrics.

\def\performancePlotExact#1#2{
\begin{groupplot}[group style={group size=3 by 2, x descriptions at=edge bottom, horizontal sep=1.5cm, vertical sep=0.4cm}, 
width=6cm, height=3.25cm,
xlabel=Time (s), ylabel=$\alpha$ (Log axis), xlabel near ticks, ylabel near ticks,
ymode=log, xmajorgrids, ymajorgrids, xmin=0, xmax=29, ymin=1, ymax=2,
ytick={1,1.2,1.4,1.6,1.8,2}, yticklabels={1,1.2,1.4,1.6,1.8,2}]
\nextgroupplot
\addAllPlotsSpecial{#1}{4}{chain}{#2}{exactRuns}
\nextgroupplot
\addAllPlotsSpecial{#1}{4}{cycle}{#2}{exactRuns}
\nextgroupplot
\addAllPlotsSpecial{#1}{4}{star}{#2}{exactRuns}
\nextgroupplot
\addAllPlotsSpecial{#1}{8}{chain}{#2}{exactRuns}
\nextgroupplot
\addAllPlotsSpecial{#1}{8}{cycle}{#2}{exactRuns}
\nextgroupplot[legend columns=8, legend to name=thisPlotLegend]
\addAllPlotsSpecial{#1}{8}{star}{#2}{exactRuns}
\addlegendentry{DP(Infinity)}
\addlegendentry{DP(1000)}
\addlegendentry{DP(2)}
\addlegendentry{SA}
\addlegendentry{2P}
\addlegendentry{NSGA-II}
\addlegendentry{II}
\addlegendentry{RMQ}
\end{groupplot}

\plotTitle{1}{1}{Chain, 4 tables}
\plotTitle{1}{2}{Chain, 8 tables}

\plotTitle{2}{1}{Cycle, 4 tables}
\plotTitle{2}{2}{Cycle, 8 tables}

\plotTitle{3}{1}{Star, 4 tables}
\plotTitle{3}{2}{Star, 8 tables}
}

\begin{figure*}[t!]
\centering
\begin{tikzpicture}[declare function={Infinity=inf;}]
\performancePlotExact{2}{MN}
\end{tikzpicture}
\ref{thisPlotLegend}
\caption{Median of precise approximation error in the interval $[1,2]$ for small queries and two cost metrics.\label{twoMetricsExactFig}}
\end{figure*}

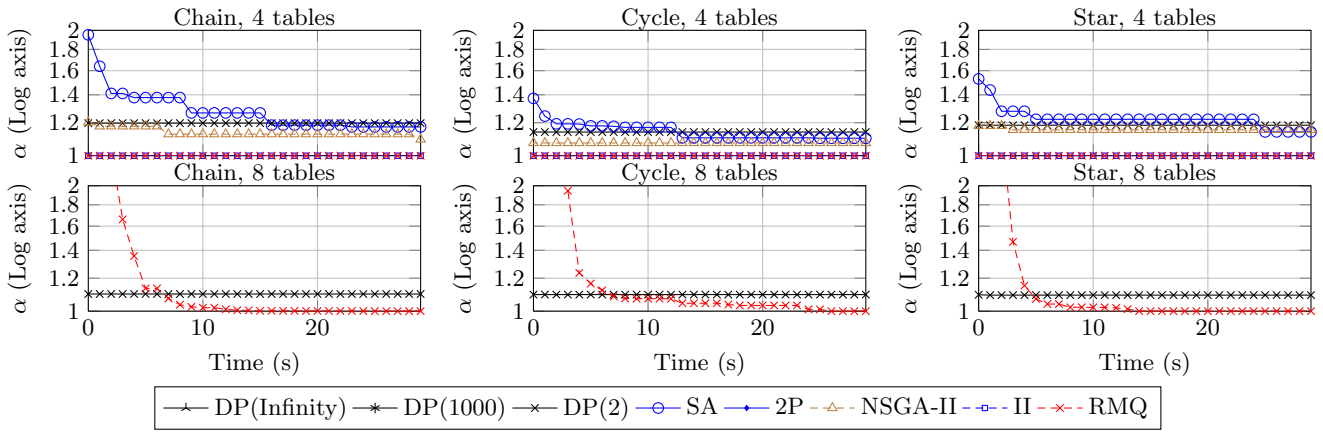
\begin{figure*}[t!]
\centering
\begin{tikzpicture}[declare function={Infinity=inf;}]
\performancePlotExact{3}{MN}
\end{tikzpicture}
\ref{thisPlotLegend}
\caption{Median of precise approximation error in the interval $[1,2]$ for small queries and three cost metrics.\label{threeMetricsExactFig}}
\end{figure*}

We have finally run an additional test series in which we calculate the approximation error more precisely than in the previous experiments. For large queries, we have no choice but to evaluate approximation precision with regards to an approximated Pareto frontier that is generated by the same randomized algorithms that we evaluate. For small queries, we can use the dynamic programming based approximation schemes to calculate an approximated Pareto frontier with formal guarantees on how closely it is approximated. For the last series of experiments, we calculate the approximated Pareto frontier by the approximation scheme, setting $\alpha=1.01$. This means that the calculated approximation error is guaranteed to be precise within a very small tolerance. We restrict ourselves to small queries joining between four and eight tables. Note that the size of the Pareto frontier produced by the approximation scheme reaches several hundreds of Pareto plans already for such small queries.

Figures~\ref{twoMetricsExactFig} and \ref{threeMetricsExactFig} show the results. We allow again 30 seconds of optimization time and restrict the plots to the domain $[1,2]$ for the approximation error. This has the same implications as discussed in the previous paragraphs: algorithms with exceedingly high errors are not shown in order not to obfuscate the performance differences between the competitive algorithms. We are primarily interested in whether or not the randomized algorithms converge to the reference Pareto frontier for small queries; therefore we choose to show a small error range above $\alpha=1$.

We find that our algorithms converges in average to a perfect approximation with $\alpha=1$ while this is not always the case for the other algorithms. In particular for queries joining eight tables and three plan cost metrics, our algorithm is the only one among all randomized algorithms that achieves a perfect approximation. The dynamic programming based approximation scheme with $\alpha=2$ performs well for small queries. It generates output nearly immediately and the approximation error is much lower than the theoretical worst case bound. The approximation scheme configuration using $\alpha=1.01$, the one generating the reference frontier, produces its solutions after less than two seconds of optimization time in average, even for three cost metrics and eight tables. We do not show results for that algorithm in the plots as its approximation error is minimal by definition. We conclude that approximation schemes often outperform randomized algorithms for small queries. 

\balance


\end{document}